\documentclass{llncs}

\usepackage{soulutf8} 
\usepackage{todonotes} 
\usepackage{cite} 
\usepackage{amssymb}
\usepackage{amsmath}
\usepackage{textcomp}
\usepackage{array}
\usepackage{color}
\usepackage{setspace}
\usepackage{comment}
\usepackage[]{datetime}
\usepackage{hyperref}
\usepackage{lineno}


\newtheorem{fact}{Fact}

\newcommand{\dspace}{\mathrm{SPACE}}

\newcommand{\nat}{\mathbb{N}}

\newcommand{\real}{\mathbb{R}}

\title{Computational Complexity of Space-Bounded Real Numbers}

\author{Masaki Nakanishi\inst{1} \and Marcos Villagra\inst{2}\thanks{corresponding author.}
}
\institute{
	Faculty of Education, Arts and Science, Yamagata University\\
	Yamagata, 990--8560, Japan
	\and
	National University of Asuncion\\
    	NIDTEC, Campus Universitario, San Lorenzo C.P. 2619, Paraguay\\
}

\authorrunning{M. Nakanishi \and M. Villagra} 


\begin{document}

\maketitle

\begin{abstract}
In this work we study the space complexity of computable real numbers represented by fast convergent Cauchy sequences. We show the existence of families of trascendental numbers which are logspace computable, as opposed to algebraic irrational numbers which seem to required linear space. We characterized the complexity of space-bounded real numbers by quantifying the space complexities of tally sets. The latter result introduces a technique to prove the space complexity of real numbers by studying its corresponding tally sets, which is arguably a more natural approach. Results of this work present a new approach to study real numbers whose transcendence is unknown.

\keywords{computable real numbers, algebraic numbers, trascendental numbers, space-bounded computation, reducibility, Turing machines}
\end{abstract}


\section{Introduction}

The set of computable real numbers form the basis of a modern theory of computable analysis. Real numbers were originally studied in computational complexity theory in the seminal work of Hartmanis and Stearns \cite{HS65}, where it was shown that algebraic numbers are polynomial-time computable. Later, Cobham \cite{Cob69} showed that no finite state automaton can compute the digits of algebraic irrational numbers. From those works stemmed what today is known as the \emph{Hartmanis-Stearns conjecture}, which states that if a real number is real time computable with respect to some natural base, then that number is either rational or trascendental. The Hartamnis-Stearns conjecture has many implications as recounted by Freivalds \cite{Fre12}, for example, it implies the no existence of optimal algorithms for integer multiplication and it will solve the transcendence of several real numbers, just to name a few. More recently, Adamczewski and Bugeaud \cite{AB07} made an important breakthrough towards the Hartmanis-Stearns conjecture where they showed that no algebraic irrational number is computable by a pushdown automaton. Currently, this is the only result where algebraic irrational numbers are not computable even in the presence of some non-constant amount of memory.

Heartmanis and Stearns \cite{HS65} presented an algorithm that computes any algebraic numbers in polynomial time and linear space. Motivated by this fact, in this work we study real numbers that are computable in bounded-space in order to understand if linear space is a necessary amount of memory for algebraic irrational numbers. First, in Theorem \ref{the:hierarchy} of Section \ref{sec:hierarchy} we show a space hierarchy theorem for real numbers, which implies that the space complexity of trascendental numbers cannot be bounded, unlike algebraic numbers. In Section \ref{sec:trascendental} we study the space complexity of trascendental numbers and we show general theorems for trascendental numbers with certain natural forms. Contrary to the situation of trascendental numbers where space-efficient algorithms exist, it is very difficult to construct space-efficient algorithms for algebraic irrational numbers.

In order to conduct a study in the theory of space-bounded real numbers, we follow an idea initiated by Ko \cite{Ko91} which relates the polynomial time computability of real numbers to classical computational complexity theory. Yu and Ko \cite{YK13} initiated a study of logspace computable real numbers where they showed how open problems in the theory of space-bounded tally sets relates to representations of a real number. In Theorem \ref{the:tally} we present a characterization of the space complexity of a real number in terms of the space complexity of a tally set representing the same number.
With this characterization, if we would like to design a space-efficient algorithm for a real number, it suffices to construct such an algorithm for its corresponding tally set, which is most of the time more natural to work with. In Theorem \ref{the:language} we present a relation between the set of left-cuts representations of a real number and its space complexity. As a result we have that tally sets corresponding to a real number and its left-cuts representations are polynomial-space equivalent. Finally, in Section \ref{sec:automata} we show that constant-space machines modeled by finite automata cannot recognize irrational numbers even with the help of advice.

For several cases of the real numbers studied in this work there exist space-efficient algorithms, whereas for algebraic irrational numbers the best algorithm needs to remember all digits previously computed. The results of this work thus suggest a new conjecture for algebraic irrational numbers, namely, that there is no algorithm that computes algebraic irrational numbers using sublinear space.




\section{Preliminaries on Computable Numbers and Complexity}
\subsection{Computable Numbers}
We use $\mathbb{N}$ to denote the set of the natural numbers including 0. In the rest of this paper, we will use multitape Turing machines with blank symbols denoted by $\boxtimes$.

A \emph{dyadic number} is a number of the form $d=m/2^n$ for integers $m$ and $n$ with $n\geq 0$. The binary expansion of a dyadic number is
\[
d=\sum_{i=0}^m s_i\cdot 2^i+ \sum_{j=1}^n t_j \cdot 2^{-j},
\]
where $s_i,t_j\in \{0,1\}$ and each $s_i$ and $t_j$ are the digits of the integer and binary parts of $d$, respectively. We say that the string $s=s_n\cdots s_0\cdot t_1\cdots t_m$ represents $d$ or $s$ is a representation of $d$ in base 2.

Given a representation $s$ of a dyadic number $d$, we denote by $\ell(d)$ the number of symbols in the representation $s$ of $d$ and we use $prec(d)$ to denote the number of symbols to the right of the binary point in $s$. Furthermore, we let $\mathbb{D}_n=\{d \> | \> prec(d)=n\}$ and $\mathbb{D}=\cup_{i=0}^\infty \mathbb{D}_n$. 

Given a function $\phi:\mathbb{N}\to \mathbb{D}$ we say that $\phi$ \emph{binary converges} to a real number $x$ if for all $n\in \mathbb{N}$, $|\phi(n)-x|\leq 2^{-n}$. Thus the sequence $\{\phi(n)\}_{n\geq 0}$ is a fast-convergence Cauchy sequence. If $\phi$ binary converges to $x$, we call $\phi$ a Cauchy function of $x$. We denote by $CF_x$ the set of all Cauchy functions of $x$; see Ko \cite{Ko91} for more details.

The function $\phi$ is computable if there exists a Turing machine that on input $n$ in unary outputs a representation of $\phi(n)$. A real number $x$ is \emph{computable} if there exists a computable function $\phi \in CF_x$. 

Given a function $T$ over $\nat$, the time (or space) complexity of a computable real number $x$ is bounded by $T$ if there exists a Turing machine that on input $0^n$ outputs a representation $s$ of a dyadic number $d$ such that $|d-x|\leq 2^{-n}$ and uses $T(n)$ moves (or $T(n)$ memory cells, respectively). As a short-hand we will use $time_x(n)$ and $space_x(n)$ to denote the time and space complexities of $x$, respectively. For any $f:\nat\to\nat$ we define the class $\dspace_\real(f(n))=\{x\in \real \> |\>  space_x(n)=\mathcal{O}(f(n))\}$. 

\begin{fact}\label{fac:rational}
	Any rational number can be computed in constant space.
\end{fact}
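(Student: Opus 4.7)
The plan is to exhibit a Turing machine that performs binary long division of $p/q$ on its output tape while keeping only the running remainder on its work tape. Since $q$ is a fixed constant for the given rational $x = p/q$, every remainder encountered lies in $\{0,1,\ldots,q-1\}$ and therefore fits in $\lceil \log_2 q \rceil = \mathcal{O}(1)$ work-tape cells. I will adopt the standard convention that the read-only input tape holding $0^n$ and the write-only output tape do not contribute to the space bound.

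First, I would treat the sign and integer part of $p/q$ as data hard-wired into the finite control, emitting them to the output tape together with the binary point in $\mathcal{O}(1)$ work space. The machine then initializes its work tape with the fractional remainder $r_0 = p \bmod q$.

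The main loop scans $0^n$ from left to right, using the position of the input head as an implicit unary counter for the number of fractional digits still to be produced. For each input cell read, the machine performs the update $r \leftarrow 2r$; if $r \geq q$ it writes a $1$ to the output and replaces $r$ by $r - q$, otherwise it writes a $0$. Doubling, comparison against $q$, and subtraction of $q$ are all fixed-arity operations on numbers bounded by $2q$, so each can be done within $\mathcal{O}(\log q) = \mathcal{O}(1)$ work cells and absorbed into the transition function. When the input head moves past the last $0$, the machine halts. The emitted string represents the dyadic number $d \in \mathbb{D}_n$ obtained by truncating the binary expansion of $p/q$ after $n$ fractional bits, so $|d - x| \leq 2^{-n}$ as required, witnessing that $x \in \dspace_\real(1)$.

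The only delicate point, and the one I would be careful to spell out, is the space accounting: because $p$ and $q$ depend on $x$ but not on $n$, they may be baked into the machine, which is why $\mathcal{O}(\log q)$ bits really are constant with respect to the input length; the $n$ bits of output and the $n$ cells of unary input are not charged to the space bound. Once that convention is fixed, no substantive obstacle remains.
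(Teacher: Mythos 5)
The paper states Fact~\ref{fac:rational} without any proof --- unlike Fact~\ref{fac:algebraic}, it carries no appendix reference and is treated as folklore. So there is no in-paper argument to compare against; I can only assess your proposal on its own terms, and it holds up. Your long-division construction is the standard witness: with $q$ fixed by the rational $x=p/q$, the remainder lives in $\{0,\dots,q-1\}$, the update $r\mapsto 2r \bmod q$ (emitting $1$ when $2r\ge q$) is a bounded operation, and the input tape serves purely as a unary counter for the number of fractional digits to emit. You also correctly isolate the only real subtlety, namely that $\mathcal{O}(\log q)$ is constant in $n$ because $p,q$ are parameters of $x$, not of the input, and that the read-only input tape and write-only output tape are excluded from the space measure --- a convention the paper implicitly uses throughout (e.g.\ in the logspace results). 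One could phrase the argument even more minimally by noting that the binary expansion of a rational is eventually periodic, so the machine needs only to hard-code a finite preperiod $w_1$ and period $w_2$ and emit $w_1 w_2 w_2 \cdots$ for $n$ steps; that is morally the same construction, since the period of your remainder sequence is exactly what produces $w_2$. Either way, nothing is missing.
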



Hartmanis and Stearns \cite{HS65} present an algorithm for computing algebraic real numbers, which runs in linear space.

\begin{fact}\label{fac:algebraic}
	Any real algebraic number can be computed in linear space. (App.\ref{sec:algebraic})
\end{fact}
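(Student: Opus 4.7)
The plan is to exploit the fact that an algebraic number $x$ is, by definition, a root of a fixed integer polynomial $p(X)=a_kX^k+\cdots+a_1X+a_0$ of some constant degree $k$, and reduce the computation of a dyadic approximation of $x$ to a bisection search that only needs to evaluate the \emph{sign} of $p$ at dyadic midpoints. Since $p$ is fixed (independent of the input precision $n$), we may hard-wire into the Turing machine both its coefficients and an isolating dyadic interval $[a,b]\subseteq\mathbb{D}$ of constant length $L$ that contains $x$ and no other real root of $p$; such an interval can be pre-computed using, for example, Sturm's theorem and depends only on $p$.

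First I would set up the main loop. The machine maintains on its work tape a current interval $[u_i,v_i]\subseteq[a,b]$ of length $L\cdot 2^{-i}$ whose endpoints are in $\mathbb{D}_i$. At iteration $i$ it computes the midpoint $m_i=(u_i+v_i)/2\in\mathbb{D}_{i+1}$, evaluates $\mathrm{sgn}(p(m_i))$, and compares it with $\mathrm{sgn}(p(u_i))$ in order to decide which half of the interval still contains $x$; it then overwrites $[u_i,v_i]$ with that half. After $n+\lceil\log_2 L\rceil+1$ iterations the surviving endpoint $u_i$ satisfies $|u_i-x|\le 2^{-n}$, which is the desired output.

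The key subroutine is the sign computation of $p(m)$ for a dyadic $m=c/2^i$ with $i=O(n)$ and $c\in\mathbb{Z}$ of bit-length $O(n)$. I would compute the integer
\[
N(m)\;=\;2^{ki}\,p(m)\;=\;\sum_{j=0}^{k}a_j\,c^{\,j}\,2^{i(k-j)},
\]
whose sign equals $\mathrm{sgn}(p(m))$. Since $k$ is constant and $|c|,2^i=2^{O(n)}$, each term $a_j c^{j}2^{i(k-j)}$ is an integer of bit-length $O(n)$, and the whole sum fits in $O(n)$ bits. Each multiplication and addition of $O(n)$-bit integers can be carried out in $O(n)$ space by the standard schoolbook algorithms (reusing the same workspace across the $k+1$ terms of the sum), so the sign evaluation uses linear space. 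Because the outer loop runs at most $O(n)$ times and each iteration only needs to keep $[u_i,v_i]$, the midpoint $m_i$, a counter, and the scratch space for one sign evaluation, the total work space is $O(n)$.

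The main obstacle is purely a bookkeeping one: ensuring that the intermediate integers used in evaluating $N(m)$ are not written down in a way that accumulates, so that reusing the same $O(n)$ cells across the $k+1$ summands and across the $O(n)$ bisection rounds really yields linear space rather than, say, $O(n^2)$. This is handled by Horner's scheme, which computes $N(m)$ as $((\,\cdots(a_k c+a_{k-1}2^{i})c+a_{k-2}2^{2i})\cdots)c+a_0 2^{ki}$, requiring only one accumulator of $O(n)$ bits; everything else, including the isolation data for $p$, is constant-size and therefore free.
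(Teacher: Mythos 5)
Your proposal is correct and follows essentially the same approach as the paper's: hard-wire an isolating initial segment of $x$ together with the coefficients of its minimal polynomial, then extend the dyadic approximation one bit per round by testing the sign of the polynomial at the next candidate midpoint, keeping all integers at $O(n)$ bits. The only cosmetic difference is that you evaluate the polynomial via Horner's scheme with a single accumulator, whereas the paper computes each power on a separate work tape $t_1,\dots,t_r$ and sums; both yield linear space since the degree is constant.
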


\subsection{Oracle Turing Machines and Reducibility}

Consider any Turing machine $M$. An oracle is a subrutine that is incorporated in $M$ and can be used to ask questions with answers ``yes'' and ``no.'' More formally, let $A$ be any set and let $M^A$ be a Turing machine that computes with $A$ as an oracle. The machine $M$ aditionally has one write-only query tape, one query inner state $q_{query}$ and two answer states, namely the ``yes'' state $q_{yes}$ and the ``no'' state $q_{no}$. The query tape works like an output tape, that is, it is write-only and everytime a symbol is written, the head over the query tape moves inmediately one cell to the right and can never move left.

In order to make a query to the oracle $A$, first $M$ writes a string in the query tape and enters a query state $q_{\text{query}}$. Then, automatically, $M$ changes its state to $q_{yes}$ if the query string belongs to the set $A$, or $q_{no}$ otherwise. When $M$ changes from the query state to one of the answer states, the query string is inmediately deleted and the head is positioned over the first cell from the left. This way, making a query to $A$ only takes one step of $M$.

The time complexity of $M^A$ is defined as follows. For any function $t:\nat\to\nat$ and any input $x$, the time complexity of $M^A$, denoted $time_{M^A}(n)$, is $t(|x|)$ if $M^A$ stops within $t(|x|)$ moves on input $x$. Note that the time it takes for $M^A$ to write in the query tape is counted towards its time complexity, whereas each oracle query counts only as one move.

Space complexity is defined similarly. For any function $s:\nat\to\nat$ and any input $x$, the space complexity of $M^A$, denoted $space_{M^A}(n)$, is $s(|x|)$ if the maximum number of work tape cells that $M^A$ uses at any time of its computation is at most $s(|x|)$ on input $x$; the number of query tape cells used are not counted.

A set $A$ is \emph{many-one reducible} to a set $B$, denoted $A\leq_m B$, if there exists a computable function $\phi$ such that $x\in A$ if and only if $\phi(x)\in B$. The set $A$ is \emph{Turing reducible} to $B$, denoted $\leq_T$, if there exists a TM $M$ such that $M^B$ computes $A$.


In this work we use $\leq_m^L$ and $\leq_T^L$ to denote many-one and Turing reductions bounded by $\mathcal{O}(\log n)$ space, respectively. Analogously, we use the superscript ``$poly$'' for polynomial-space reductions. We also use $\equiv$ to denote equivalence, that is, $A \equiv_T B$ if and only if $A\leq_T B$ and $B\leq_T A$.


\section{A Space-Hierarchy Theorem}\label{sec:hierarchy}

From Fact \ref{fac:algebraic} we know that any algebraic real number is computable in linear space. In this section, however, we show in Theorem \ref{the:hierarchy} that there exists an infinite hierarchy of computable real numbers. Thus, for trascendental numbers there is no such bound.

Before proving the main result of this section, we introduce the following technical lemma.

\begin{lemma}[App.\ref{app:output-transforms}]\label{lem:output-transforms}
Let $M$ be a Turing machine that computes a real number $x$ using $s(n)\geq \log n$ space. There exists a machine $M'$ that on input $0^n$ outputs the $n$-th digit of $x$ in space $\mathcal{O}(\max\{\log n, s(n)\})$.
\end{lemma}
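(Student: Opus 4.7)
The plan is to build $M'$ as a wrapper around $M$ that simulates $M$ on a slightly more precise input and intercepts a single bit of the dyadic output on the fly, without ever materializing the full representation on tape. This avoids the obvious but wasteful strategy of computing the whole approximation $d$ and then reading its $n$-th position, whose space cost would include the length of $d$'s representation.

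First, on input $0^n$, $M'$ computes $n' := n + 2$ on a binary counter using $O(\log n)$ cells. It then simulates $M$ on the virtual input $0^{n'}$; the input itself never needs to be written anywhere, since any read request from $M$'s input head can be answered by comparing a head-position counter to $n'$. The simulation stores $M$'s work-tape contents on $M'$'s work tape, costing $s(n')$ cells. Under the mild regularity $s(n+2) = O(s(n))$ that holds throughout the regime of this paper (logspace, polylog space, polynomial space, and more generally for any space-constructible bound), this is $O(s(n))$.

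Second, $M'$ does not copy $M$'s output symbols to its own output tape. Instead it maintains a small finite-state controller together with a counter $k$ using $O(\log n)$ bits. It intercepts each symbol that $M$ is about to write: it skips sign symbols and integer digits, waits for the binary point, and then increments $k$ on every subsequent fractional bit emitted. When $k$ reaches $n$, $M'$ writes that bit onto its own output tape and halts the simulation immediately. The total work space is the simulation's $s(n')$ cells plus the $O(\log n)$ counters, i.e.\ $O(\max\{\log n, s(n)\})$, exactly as claimed.

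Correctness relies on the fact that $|d - x| \le 2^{-(n+2)}$ pins down the $n$-th binary digit of $x$ uniquely whenever $x$ is not dyadic; the extra two bits of precision are there precisely to absorb any carry propagation across the $n$-th position. The main obstacle is therefore the dyadic edge case, where $x$ admits two binary expansions (one ending in $0^\omega$, the other in $1^\omega$): this has to be settled by convention, for example by fixing the terminating representation as canonical, and then the offset of two bits is sufficient for the $n$-th digit of the output of $M$ on $0^{n+2}$ to match the chosen expansion of $x$. The rest of the construction is routine intercept-on-the-fly simulation whose space cost decomposes cleanly into the simulation overhead and the $O(\log n)$ counters.
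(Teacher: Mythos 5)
Your core mechanism---simulate $M$ and intercept its output symbols on the fly with a logspace counter rather than materialize the full approximation---is the same as the paper's. The paper's version is simpler: it runs $M$ on the physically present input $0^n$ (no virtual input tape needed), counts output symbols with a binary counter bounded by a tape \texttt{max} holding $n$, and emits the $n$-th symbol. The space bound $\mathcal{O}(\max\{\log n, s(n)\})$ then drops out exactly as in your accounting. Your refinement of skipping sign/integer/binary-point symbols so that the counter tracks only fractional positions is actually a more careful bookkeeping step than what the paper writes, and is harmless.

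Where you genuinely diverge from the paper, and where there is a real gap, is the extra two bits of precision and the carry-propagation argument. The claim that $|d-x|\le 2^{-(n+2)}$ ``pins down the $n$-th binary digit of $x$ uniquely whenever $x$ is not dyadic'' is false. Take $x$ non-dyadic with fractional expansion $0.d_1\cdots d_{n-1}1\,0\,0\,0\,d_{n+4}\cdots$, so bit $n$ is $1$ and bits $n{+}1,n{+}2,n{+}3$ are $0$. Then $x$ sits within $2^{-(n+3)}$ of the dyadic point $y=0.d_1\cdots d_{n-1}1$, and a legal approximation $d = x - 2^{-(n+2)}$ lies below $y$ and has $n$-th fractional bit $0$. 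More generally no finite offset $n'=n+k$ determines bit $n$ of $x$ from a fast-Cauchy approximation, because whenever $x$ has a long run of equal bits just past position $n$ the $n$-th digit is sensitive to arbitrarily small perturbations; this is the well-known discontinuity of the binary-expansion map in computable analysis, and two extra bits do not evade it. The paper avoids this wall by implicitly reading ``the $n$-th digit of $x$'' as ``the $n$-th output symbol produced by $M$ on $0^n$,'' which is all that is consumed in Theorem~\ref{the:hierarchy}; under that reading your $n'=n+2$ offset and carry argument are both unnecessary, and under the stronger reading they do not suffice. You should drop the extra-precision machinery and simply count and emit from the run of $M$ on $0^n$ itself.
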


\begin{theorem}[Space-Hierarchy Theorem for Real Numbers]\label{the:hierarchy}
Let $f$ and $g$ be two space-constructible functions where $f(n)=o(g(n))$ and $g(n)\geq f(n) \geq \log n$. Then $\dspace_\real(f(n)) \subsetneq \dspace_\real(g(n))$.
\end{theorem}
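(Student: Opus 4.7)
\medskip
\noindent\textbf{Proof plan.} The plan is to carry out a standard space-hierarchy diagonalization at the level of binary digits, using Lemma \ref{lem:output-transforms} to bridge the approximation representation that defines $\dspace_\real$ and the ``$n$-th digit'' representation. I would fix an enumeration $M_1,M_2,\ldots$ of multitape Turing machines, padded so that every machine appears as $M_k$ for infinitely many indices $k$, and construct a target real $x\in(0,1)$ by specifying its binary expansion in blocks of three: for each $k\geq 1$ put $x_{3k+1}=0$, $x_{3k+2}=1$, and let $x_{3k}=1-b$ whenever the simulation of $M_k$ on input $0^{3k}$ inside a $g(3k)$-cell work area halts with a single bit $b$ on its output tape, and $x_{3k}=0$ otherwise. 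The fixed $0$ and $1$ in every block force $x$ to have infinitely many $0$s and infinitely many $1$s, so $x$ is not dyadic, its binary expansion is unique, and ``the $n$-th digit of $x$'' is a well-defined function to which Lemma \ref{lem:output-transforms} applies cleanly in both directions.

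For the upper bound, a single machine computes the $n$-th digit of $x$ as follows: on input $0^n$ it branches on $n\bmod 3$, emitting $0$ or $1$ in constant space for the two non-diagonal residues, and for $n=3k$ it uses space-constructibility of $g$ to mark off $g(n)$ work cells, writes down the description $\langle M_k\rangle$ of length $O(\log k)=O(f(n))$, and simulates $M_k$ on $0^n$ inside that bounded region, emitting $1-b$ or $0$ as above. The machine runs in $O(g(n))$ space, and outputting the first $n$ of these digits in sequence yields a $2^{-n}$ dyadic approximation to $x$ in the same asymptotic space (the trivial converse direction of Lemma \ref{lem:output-transforms}), giving $x\in\dspace_\real(g(n))$.

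For the lower bound, suppose toward contradiction that $x\in\dspace_\real(f(n))$. By Lemma \ref{lem:output-transforms} some machine $N$ outputs the $n$-th digit of $x$ in space $O(\max\{f(n),\log n\})=O(f(n))$. Padding guarantees that $N=M_k$ for arbitrarily large $k$, and since $f(n)=o(g(n))$ the total space required by a universal simulator of $N$ on input $0^{3k}$—namely $N$'s own usage, the simulation overhead, and the $O(\log k)$ cells storing $\langle M_k\rangle$—is eventually strictly below $g(3k)$. For every such sufficiently large $k$ the diagonal step therefore captures $N$'s true output and by construction sets $x_{3k}=1-N(0^{3k})$, while the assumed correctness of $N$ requires $N(0^{3k})=x_{3k}$, a contradiction.

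The hard part will be controlling the two standard sources of slack in space diagonalizations: the multiplicative constant hidden in $O(f(n))$, and the ambiguity of binary expansions for dyadic reals. The padded enumeration absorbs the first, because every candidate machine recurs at arbitrarily high indices where the $g(3k)$ budget dominates its particular constant; the forced $01$-tail in each three-digit block handles the second by keeping $x$ strictly non-dyadic, so the digit function used throughout is unambiguous. Once these are dealt with, the argument reduces to the standard space-diagonalization transported through the digit/approximation equivalence provided by Lemma \ref{lem:output-transforms}.
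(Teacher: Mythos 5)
Your proof is correct, but it takes a genuinely different route from the paper's. The paper does not diagonalize directly: it invokes the classical space-hierarchy theorem of Stearns, Hartmanis and Lewis \cite{SHL65} as a black box to obtain a tally language $\texttt{A}'=\{0^n \mid n\in\texttt{A}\}$ that is decidable in $\mathcal{O}(g(n))$ but not $\mathcal{O}(f(n))$ space, and then translates $\texttt{A}$ into the real $a=0.a_1a_2\cdots$ with $a_i=1$ iff $i\in\texttt{A}$. The upper bound on $a$ is then immediate from the $\mathcal{O}(g)$-space decider for $\texttt{A}'$, and the lower bound is a two-line reduction: if $a\in\dspace_\real(f(n))$, Lemma~\ref{lem:output-transforms} yields an $\mathcal{O}(f(n))$-space digit extractor, which would decide $\texttt{A}'$ in $\mathcal{O}(f(n))$ space---contradiction. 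You instead build the hard real from scratch by a padded-enumeration diagonalization over machines, carving the expansion into three-digit blocks so that the forced $01$ tail keeps the number strictly non-dyadic (making ``the $n$-th digit'' unambiguous) while the first digit of each block records the diagonal bit $1-b$; you then use Lemma~\ref{lem:output-transforms} in both directions, together with the usual $\mathcal{O}(g(n))$-space universal simulation under a budget clock. Both arguments are sound. The paper's route is substantially shorter because it hides all the diagonalization machinery (the machine enumeration, the constant-factor slack, the universal simulator) inside the cited theorem, and it isolates exactly one application of Lemma~\ref{lem:output-transforms}. Your route is self-contained and surfaces two subtleties the paper silently inherits from \cite{SHL65}: eliminating dyadic-expansion ambiguity so the digit function is well-defined, and absorbing the multiplicative constant in the space bound via padding. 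In essence, both proofs are transporting the standard set-level hierarchy to reals through the digit/approximation correspondence of Lemma~\ref{lem:output-transforms}; you inline the set-level argument, while the paper cites it.
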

\begin{proof}
By the space hierarchy theorem of \cite{SHL65}, there exists a recursive set $\texttt{A}$ such that the set $\texttt{A}'=\{0^n \> |\> n\in \texttt{A}\}$ is computable in space ${\mathcal{O}(g(n))}$ but not computable in space ${\mathcal{O}(f(n))}$; in particular, there exists a machine $M_A$ that on input $0^n$ outputs 1 if $n\in A$ and outputs 0 if $n\notin A$ in space ${\mathcal{O}(g(n))}$, and there is no Turing machine with a unary input that decides $\texttt{A}$ in space ${\mathcal{O}(f(n))}$.

 We will construct a real number $a$ from $\texttt{A}$ such that $a$ is in $\dspace_\real(g(n))$ and not in $\dspace_\real(f(n))$. Let $a=0.a_1a_2\cdots$, and define $a_i=1$ if $i\in \texttt{A}$ and 0 otherwise.

For the sake of contradiction, assume that $a\in \dspace_\real(f(n))$, that is, there exists a machine $M_a$ that on input $0^n$ writes $0.a_1a_2\cdots a_n$ on its output tape and $space_{M_a}(n)=\mathcal{O}(f(n))$.

Now we construct a machine $N$ that on input $0^n$ decides if $n\in \texttt{A}$ using ${\mathcal{O}(f(n))}$ space, thus contradicting the fact that $\texttt{A}$ is not computable in space ${\mathcal{O}(f(n))}$.


Since $M_a$ computes $a$ in space $\mathcal{O}(f(n))$, from Lemma \ref{lem:output-transforms}, there exists a TM $M_a'$ that on input $0^n$ outputs the $n$-th bit of $a$ in space $\mathcal{O}(f(n))$. Thus, the machine $N$ simulates $M_a'$ on input $0^n$ and outputs whatever $M_a'$ outputs, and hence, the set $\texttt{A}$ is computable in space $\mathcal{O}(f(n))$, which is a contradiction.\qed
\end{proof}

\section{Space Complexity of Trascendental Numbers}\label{sec:trascendental}

From the Space-Hierarchy Theorem of the previous section it is understood that there is no space upper-bound on the set of trascendental numbers. There are trascendental numbers, however, with ``natural'' definitions that can be computed efficiently. 

\begin{theorem}\label{the:rec-space2}
	Let $f$ be any computable and strictly monotonically increasing function over the natural numbers.
	
	Let 
	\[
	\mu_f=\sum_{k=1}^\infty \frac{1}{10^{f(k)}}.
	\]
	Then the number $\mu_f$ is computable in space $\mathcal{O}(\max\{\log f(n), \log(n),s(\log n)\})$, where $s(n)$ is an upper bound on the space complexity of $f$.
\end{theorem}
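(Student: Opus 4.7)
My plan is to design a machine that on input $0^n$ outputs a dyadic $\tilde d$ with $|\tilde d - \mu_f|\le 2^{-n}$ within the claimed space. The idea is to truncate the series at $O(n)$ terms and reduce the ensuing decimal-to-binary conversion to integer division of polynomial-size numbers, which admits a logspace implementation.

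First I would fix $N := \lceil (n+2)/\log_2 10 \rceil$. Since $f$ is strictly increasing on the naturals we have $f(k)\ge k$, so
\[
\sum_{k:\,f(k)>N} 10^{-f(k)} \;\le\; \sum_{j\ge N+1} 10^{-j} \;=\; \frac{10^{-N}}{9} \;<\; 2^{-n-1}.
\]
Hence it suffices to output a dyadic within $2^{-n-1}$ of the finite sum $S_N := \sum_{k:\,f(k)\le N} 10^{-f(k)}$, which I write as $S_N = A/B$ with $A := \sum_{k:\,f(k)\le N} 10^{N-f(k)}$ and $B := 10^N$. Both $A$ and $B$ are integers of $O(n)$ bits, so the target is $\tilde d := \lfloor 2^{n+1}A/B \rfloor / 2^{n+1}$; its $n{+}1$ binary digits will be streamed onto the output tape.

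The essential point is that each bit of $A$ and each bit of $B$ can be produced on demand within the target budget. A bit of $B = 10^N$ comes from the iterated product of $N$ copies of $10$, and a bit of $A$ comes from iterated addition of the $O(n)$-bit integers $10^{N-f(k)}$ over $k$ with $f(k) \le N$; both are standard logspace tasks in the length of their operands. To enumerate the relevant indices, iterate $k = 1, 2, \dots$ and halt when $f(k) > N$: the counter $k$ uses $O(\log n)$ space, storing $f(k)$ uses $O(\log f(n))$ space, and evaluating $f$ uses $O(s(\log n))$ space. Feeding these bit-queries into a classical logspace algorithm for division of polynomial-size integers then streams the bits of $\lfloor 2^{n+1}A/B\rfloor$ to the output tape; the overall workspace is the maximum of the two, yielding $\mathcal O(\max\{\log f(n), \log n, s(\log n)\})$.

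The main obstacle is the base conversion itself: a direct ``shift-and-extract'' loop on the decimal representation of $S_N$ would require $\Omega(n)$ storage, blowing the budget. Routing the conversion through a logspace division routine avoids this, but forces one to verify that bits of the auxiliary integers $A$ and $B$ are computable on the fly within the claimed space from the description of $f$, which is precisely where the $\log f(n)$ and $s(\log n)$ terms enter the bound.
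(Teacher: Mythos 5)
Your proposal is correct, but it takes a genuinely different route from the paper's own proof, and the contrast is instructive.

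The paper sidesteps base conversion entirely: on input $0^n$ it outputs the \emph{decimal} expansion of $\mu_f$ directly, iterating $i=1,\dots,n$, computing $f(i)$, and writing a $1$ at output position $f(i)$ (and $0$'s between). Because $f$ is strictly increasing, the output head only moves right, so the machine needs just a position counter for the output tape ($\mathcal O(\log f(n))$ bits, since the output on $0^n$ has $f(n)$ decimal digits), a position counter for the input tape ($\mathcal O(\log n)$ bits), and the workspace to evaluate $f$ ($\mathcal O(s(\log n))$). No arithmetic beyond evaluating $f$ is required, and the precision guarantee $10^{-f(n)}\le 2^{-n}$ follows from $f(n)\ge n$. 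This is where the $\log f(n)$ term in the bound actually originates.

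You instead truncate the series at $N=\mathcal O(n)$ terms, package the partial sum as $A/B$ with $A,B$ integers of $\mathcal O(n)$ bits, and stream the binary digits of $\lfloor 2^{n+1}A/B\rfloor$ by routing through a logspace integer-division routine (Chiu--Davida--Litow; the paper cites this same result in its appendix) with on-demand bit access to $A$ and $B$ via logspace iterated multiplication and iterated addition. This buys you strict conformity with the paper's own definition, which requires $\phi$ to map into the dyadic rationals $\mathbb D$ and hence a \emph{binary} output; the paper's own construction actually emits a base-10 string, which technically falls outside $\mathbb D$. The price is that you now lean on non-trivial black boxes (logspace iterated product and division) rather than a one-pass write. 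One small remark: in your setting the $\log f(n)$ term is superfluous --- all $f(k)$ values you actually store are bounded by $N=\mathcal O(n)$, and the output has only $n+1$ bits, so your construction really achieves $\mathcal O(\max\{\log n, s(\log n)\})$, which of course implies the claimed bound; you might state this sharper bound explicitly and drop the $\log f(n)$ term from your analysis of the space to store $f(k)$.
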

\begin{proof}[sketch]
	We can construct a TM that outputs the representation of $\mu_f$ in base 10 as follows. For each input symbol (say, the $i$-th input symbol), we compute $f(i)$ and store it in the work tape, which needs $\mathcal{O}(s(n))$ and $\mathcal{O}(\log f(n))$ space, respectively. Then, we write a 1 in the $f(i)$-th cell of the output tape. Note that the function $f$ is strictly monotonically increasing. Thus, the output tape-head never goes back to the left. In order to execute the above procedure, we need to track the positions of the input and the output tapes, which needs $\mathcal{O}(\log n)$ and $\mathcal{O}(\log f(n))$ space, respectively. Thus, our algorithm uses $\mathcal{O}(\max\{\log f(n), \log(n), s(\log n)\})$ space.
	\qed
\end{proof}
See App.\ref{sec:rec-space2} for the detailed proof. Immediately we obtain the following corolary.

\begin{corollary}
Let $d$ be any positive integer. Any number of the form
\[
\sum_{k=1}^\infty \frac{1}{10^{k^d}}
\]
is computable in logspace.
\end{corollary}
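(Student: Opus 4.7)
The plan is to apply Theorem \ref{the:rec-space2} directly with the choice $f(k)=k^d$. First I would verify the hypotheses of that theorem: the function $f(k)=k^d$ is strictly monotonically increasing on the positive integers when $d\geq 1$, and it is plainly computable, so Theorem \ref{the:rec-space2} applies and yields an upper bound of $\mathcal{O}(\max\{\log f(n),\log n, s(\log n)\})$ on the space needed, where $s$ denotes an upper bound on the space complexity of computing $f$.

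The next step is to bound each of the three terms inside the maximum. The first is $\log f(n)=\log(n^d)=d\log n=\mathcal{O}(\log n)$, since $d$ is a fixed constant. The second is $\log n$ on the nose. For the third I need to show that $s(m)=\mathcal{O}(m)$, where $m$ is the binary length of the input $k$: because $d$ is constant, $k^d$ can be obtained by at most $d-1$ successive multiplications, with each intermediate product having length at most $dm=\mathcal{O}(m)$, and a standard schoolbook multiplication subroutine together with a small counter tracking how many multiplications have already been performed uses only $\mathcal{O}(m)$ additional space. Hence $s(\log n)=\mathcal{O}(\log n)$.

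Combining the three bounds, the maximum is $\mathcal{O}(\log n)$, which delivers the desired logspace computability. The derivation is essentially a mechanical substitution into Theorem \ref{the:rec-space2}, so I do not expect any serious obstacle; the only point that actually requires thought is checking that $k\mapsto k^d$ can be computed in space linear in the input length, which is routine precisely because the exponent $d$ is a constant fixed independently of the input.
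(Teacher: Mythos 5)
Your proposal is correct and follows exactly the route the paper intends: the paper presents this corollary as an immediate consequence of Theorem~\ref{the:rec-space2}, and your work simply makes explicit the instantiation $f(k)=k^d$ and the verification that all three terms in the $\max$ collapse to $\mathcal{O}(\log n)$. The only point requiring any care---that $k\mapsto k^d$ is computable in space linear in the bit-length of $k$ because $d$ is a fixed constant---is handled correctly.
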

In particular, the number $\mu_{n^2}$ is known to be trascendental, but $\mu_{n^3}$ is still open. Another interesting example is Liouvilles's constant which is when $f(k)=k!$, and hence, from  Theorem \ref{the:rec-space2} it follows that Liouville's constant is in $\dspace_\real(n\log n)$.

From Fact \ref{fac:rational} we know that finite automata or constant-space machines can only compute rational numbers. By a slight change in the definition of what it means for a number to be computable it allows finite automata to compute some irrational numbers.

Let $\Sigma_k$ be the set $\{0,\dots,k-1\}$ and let $w_r\cdots w_0$ in $\Sigma_k^r$ be the $k$-ary expansion of $n$, that is, $n=\sum_{i=0}^r w_i\cdot k^i$. A real number $a=a_0+\sum_{i\geq 1}a_ib^{-i}$ is $(k,b)$-automatic if there exists a finite state automaton that takes $n$ as input in radix-$k$ and outputs $a_n$. 

\begin{theorem}[App.\ref{sec:automatic}]\label{the:automatic}
Any $(k,b)$-automatic number is computable in logspace.
\end{theorem}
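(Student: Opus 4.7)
The plan is to show that on input $0^n$ we can output a dyadic $d$ with $|d - a| \le 2^{-n}$ using $\mathcal{O}(\log n)$ workspace. I would split the argument into a base-$b$ digit-extraction routine and a base-conversion step.

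First, given an index $i$, the $i$-th base-$b$ digit $a_i$ can be produced in space $\mathcal{O}(\log i)$: convert $i$ into its radix-$k$ representation by repeated division (standard logspace base conversion) and feed the resulting digits one at a time into a simulation of the given automaton, keeping only its (constant-size) state. The output is $a_i$, using $\mathcal{O}(\log i)$ counters together with $O(1)$ bits of DFA state.

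Second, I would assemble the dyadic approximation. Choose $m = \mathcal{O}(n)$ so that the truncation $\tilde a = \sum_{i=0}^{m} a_i b^{-i} = N/b^{m}$, with $N = \sum_{i=0}^{m} a_i b^{m-i}$, satisfies $|a - \tilde a| \le 2^{-n-1}$. The output is $d = \lfloor 2^{n+1}\tilde a \rfloor / 2^{n+1}$, which lies within $2^{-n}$ of $a$, so it suffices to stream the bits of the $\mathcal{O}(n)$-bit integer $Q = \lfloor 2^{n+1} N / b^{m} \rfloor$ onto the output tape followed by placing the binary point. The ingredients are iterated multiplication to produce $b^{m}$, iterated addition to produce $N$ from the shifted digits $a_i b^{m-i}$ (where each digit is obtained by re-invoking the first-stage routine), and integer division to obtain $Q$. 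Each of these primitives is in $\mathrm{DLOGTIME}$-uniform $\mathrm{NC}^1 \subseteq \LOG$, and logspace is closed under composition, so bits of $N$ and $b^m$ can be demand-fetched by the division subroutine without ever storing these large integers explicitly.

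The hard part will be the second stage: since $N$, $b^m$, and $2^{n+1}N$ each have $\Theta(n)$ bits, workspace cannot hold them. The resolution rests on two standard ingredients: logspace integer arithmetic (addition, iterated addition, iterated multiplication, integer division), and the closure of $\LOG$ under composition, which together let us run the chain of subroutines while maintaining only $\mathcal{O}(\log n)$ bits of state (loop counters, positions on output/oracle tapes, and the DFA state). Putting everything together yields a logspace Turing machine computing a valid dyadic approximation to $a$, hence $a \in \dspace_\real(\log n)$.
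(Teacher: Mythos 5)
Your proof is correct, and it is considerably more careful than the paper's, which handles the base issue informally. The paper's proof simply runs a binary counter $\texttt{count}$ from $1$ to $n$, feeds each counter value to the automaton $M_a$, and writes the resulting base-$b$ digits $a_1 a_2 \cdots a_n$ directly to the output tape; the error is then at most $b^{-n} \le 2^{-n}$, and the paper treats this base-$b$ string as the output, just as it earlier treats a base-$10$ string as the output in the proof of Theorem~\ref{the:rec-space2}. This is simpler but glosses over two things your proposal addresses explicitly: (i) the automaton expects its index in radix $k$, so the binary counter must be converted; and (ii) the formal definition of $\dspace_\real$ requires the machine to emit a \emph{dyadic} $d$ with $|d - x| \le 2^{-n}$, not a base-$b$ string. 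Your second stage does the honest work of converting the truncated base-$b$ expansion $N/b^m$ into a dyadic $\lfloor 2^{n+1}N/b^m\rfloor/2^{n+1}$ using logspace iterated multiplication, iterated addition, and integer division (the latter due to \cite{CDL01}, which the paper already relies on elsewhere) together with closure of $\LOG$ under composition so that the $\Theta(n)$-bit intermediates need never be materialized. This buys a proof that matches the paper's stated definitions at the cost of invoking nontrivial logspace-arithmetic machinery; the paper's version buys brevity at the cost of tacitly relaxing the definition to allow approximation in an arbitrary natural base.
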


\section{Space-Bounded Real Numbers and Tally Sets}

Tally sets are languages over unary alphabets, that is, singleton alphabets.
In this section, we show in Theorem \ref{the:tally} that computable real numbers are computationally equivalent to tally sets, thus,  establishing a strong connection between space-bounded computational complexity and the theory of computable real numbers.

Before going into the main result of this section in Theorem \ref{the:tally}, first we present some technical lemmas which are also relevant for the remaining of this paper.

\begin{lemma}[Unary simulation lemma]\label{lem:unary-sim}
Let $M$ be a TM that computes a function $f:\{0\}^*\to \{0,1\}^*$ using space $s(n)$ with $s$ space-constructible. There exists a TM $N$ that simulates $M$ using space $\mathcal{O}(\max\{\log n,s(n)\})$ and computes a function $g:\{0,1\}^*\to \{0,1\}^*$ with $g(n)=f(0^n)$.
\end{lemma}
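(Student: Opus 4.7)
The plan is to design $N$ as a step-by-step simulator of $M$ that never materializes the unary string $0^n$, but instead keeps a binary counter for the position of $M$'s input head and resolves each input-tape read by comparing that counter against its own input $n$. Since $N$ receives $n$ in binary, its input already has length $\mathcal{O}(\log n)$, and the unary tape of $M$ is encoded implicitly by the pair (counter, $n$).

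The construction proceeds in three parts. First, on input $n$ (in binary), $N$ sets up on its work tape a block of $\mathcal{O}(\log n)$ cells holding a position counter $p$, initialized to $1$; a region reserved for $M$'s work tape, whose extent grows as needed up to $s(n)$; and a few auxiliary scratch cells. The current state of $M$ is kept in $N$'s finite control. Second, $N$ simulates $M$ one step at a time. Whenever $M$'s transition depends on the symbol under its input head, $N$ compares $p$ against its own input $n$ digit by digit, moving its input head over $n$ while scanning the stored counter; if $1 \le p \le n$ the simulated symbol is $0$, otherwise it is $\boxtimes$. When a simulated step moves $M$'s input head, $N$ increments or decrements $p$ in place using standard binary arithmetic with carries. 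Work-tape reads, writes, and head motions of $M$ are carried out directly on the reserved block, and any output symbols $M$ emits are copied verbatim to $N$'s output tape. When $M$ halts, $N$ halts.

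Third, the space accounting. The simulated work-tape region uses $s(n)$ cells; the counter $p$, together with the scratch area used for comparison and for increment/decrement, takes $\mathcal{O}(\log n)$ cells; storing $M$'s work-head position within the simulated work tape takes $\mathcal{O}(\log s(n))$ additional cells, which is dominated by $s(n)$. Space-constructibility of $s$ is used only to let $N$ safely delimit the work-tape region without ever exceeding the intended bound. Summing these contributions gives total space $\mathcal{O}(\max\{\log n, s(n)\})$, as required.

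The main obstacle is nothing conceptually deep; it is careful bookkeeping. I expect the two slightly delicate points to be (i) showing that the comparison of $p$ with $n$ and the in-place update of $p$ can each be performed within $\mathcal{O}(\log n)$ cells, including propagation of carries and the move between $n$'s tape and $p$'s tape, and (ii) handling the boundary cases $p=0$ and $p=n+1$, so that $N$ faithfully reports the blank cells flanking $M$'s unary input and therefore reproduces $M$'s computation step for step.
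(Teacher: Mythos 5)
Your proof is correct and follows essentially the same approach as the paper's: both use a binary counter recording the simulated input-head position of $M$ (the paper calls this the ``fake tape'' trick), compare it against the integer value of $N$'s binary input to decide whether to feed $0$ or $\boxtimes$ to the simulated transition, and charge $\mathcal{O}(\log n)$ for the counter plus $\mathcal{O}(s(n))$ for the simulated work tape.
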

\begin{proof}
In order for $N$ to simulate $M$, we use a ``fake tape'' trick and exploit the fact that $M$ only works with unary inputs. The machine $N$ uses a  tape \texttt{pos}, initialized in 0, that stores in binary the position of the input head of $M$. Furthermore, $N$ will required an additional constant number of work tapes sufficient to run the simulation of $M$.

On input $x\in \{0,1\}^n$, first $N$ initializes \texttt{pos}:=0 and repeats the following procedure until $M$ stops its computation. If $\texttt{pos} \leq x$, simulate one step of $M$ with input symbol $0$; otherwise, simulate one step of $M$ with input symbol $\boxtimes$.  In either case, write in the output tape of $N$ whatever output symbol $M$ generates and update \texttt{pos} according to the move performed by $M$'s input tape head. Thus, the output of $N$ equals the output of $M$. 

The space used by the simulation of $M$ is $\mathcal{O}(s(n))$ and we only require $\mathcal{O}(\log n)$ bits to store the input tape position of $M$.\qed
\end{proof}

\begin{lemma}\label{lem:root}
For any $n\in \nat$ given in unary, $\lfloor\sqrt{n}\rfloor$ is computable in $\mathcal{O}(\log n)$ space.
\end{lemma}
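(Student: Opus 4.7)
My plan is to binary-search for $k^\ast = \lfloor \sqrt{n} \rfloor$ on a binary scratch tape, using the classical fact that integer multiplication of $\mathcal{O}(\log n)$-bit numbers lies in $\mathcal{O}(\log n)$ space.

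First I would scan the unary input $0^n$ from left to right while maintaining a binary counter on a work tape, incrementing it once per input symbol. This produces a binary representation of $n$ in $\lceil \log_2(n+1) \rceil = \mathcal{O}(\log n)$ cells, after which the input tape is no longer consulted. (This step is essentially the ``binarize the input'' half of Lemma~\ref{lem:unary-sim}.)

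Next I would perform a binary search for the largest integer $k$ with $k^2 \leq n$. I would store two binary values $\mathit{lo}$ and $\mathit{hi}$, initialized to $0$ and $n$ respectively. At each iteration I would compute $\mathit{mid} = \lfloor (\mathit{lo} + \mathit{hi} + 1)/2\rfloor$ on a scratch region, then compute $\mathit{mid}^2$ on a second scratch region, compare it with the stored $n$, and update $\mathit{lo} \leftarrow \mathit{mid}$ if $\mathit{mid}^2 \leq n$ or $\mathit{hi} \leftarrow \mathit{mid}-1$ otherwise. After $\mathcal{O}(\log n)$ iterations the invariant forces $\mathit{lo} = \mathit{hi} = k^\ast$, which is then written to the output tape.

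The only non-routine ingredient is squaring an $\mathcal{O}(\log n)$-bit number in $\mathcal{O}(\log n)$ space; this is the classical fact that binary multiplication is in logspace, realizable by shift-and-add into a single binary accumulator of size $\mathcal{O}(\log n)$. Every other quantity maintained --- $n$, $\mathit{lo}$, $\mathit{hi}$, $\mathit{mid}$, and $\mathit{mid}^2$ (which is at most $n^2$ and hence has $\mathcal{O}(\log n)$ bits) --- fits in $\mathcal{O}(\log n)$ cells, and addition, subtraction, halving and comparison of such binary numbers are clearly in $\mathcal{O}(\log n)$ space. Because the scratch regions are reused across iterations, the total workspace remains $\mathcal{O}(\log n)$. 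I expect the logspace multiplication step to be the only subtlety; the rest is routine bookkeeping.
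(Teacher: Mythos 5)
Your proof is correct, but it takes a genuinely different route from the paper's. Both proofs begin by converting the unary input to binary on a work tape and both ultimately rest on the same classical ingredient — squaring (i.e., multiplying) $\mathcal{O}(\log n)$-bit binary numbers in $\mathcal{O}(\log n)$ space — but the search strategies differ. The paper performs a simple \emph{linear} search: it maintains two binary counters $\texttt{floor}$ and $\texttt{ceiling}=\texttt{floor}+1$, repeatedly squares both, and increments them by one until $\texttt{floor}^2 \leq n < \texttt{ceiling}^2$. You instead perform a \emph{binary} search on the interval $[0,n]$, maintaining $\mathit{lo}$, $\mathit{hi}$, and $\mathit{mid}$ and squaring $\mathit{mid}$ each step. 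Both fit in $\mathcal{O}(\log n)$ space since every quantity maintained is $\mathcal{O}(\log n)$ bits and scratch regions are reused; your version is more efficient in time ($\mathcal{O}(\log n)$ iterations versus $\mathcal{O}(\sqrt{n})$ for the paper's incremental scan), while the paper's is slightly simpler to state because there is no search invariant to track. For the space bound the lemma asserts, the two are interchangeable. One small nit: your parenthetical attributing the unary-to-binary conversion to Lemma~\ref{lem:unary-sim} is backwards — that lemma converts a machine for unary inputs into one for binary inputs via a fake-tape position counter, rather than binarizing an input — but this is an aside and does not affect the argument.
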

\begin{proof}
We construct a TM $M$ that on input $0^n$ writes $\lfloor \sqrt{n}\rfloor$ on its output tape.
\begin{enumerate}
\item Initialize tapes $\texttt{floor}:=0$ and $\texttt{ceiling}:=1$ in binary.
\item Compute $n$ in binary and store it in $\texttt{number}$.
\item Enter loop.
	\begin{enumerate}
	\item[(a)] Compute $(\texttt{floor})^2$ and $(\texttt{ceiling})^2$ and store the results in \texttt{lower-bound} and \texttt{upper-bound}, respectively.
	\item[(b)] If $\texttt{lower-bound}\leq \texttt{number} < \texttt{upper-bound}$ then copy the contents of \texttt{floor} to the output tape and stop;
	\item[(c)] else increment the contents of \texttt{floor} and \texttt{ceiling} by 1.
	\end{enumerate} 
\end{enumerate}

The procedure above works by checking in each step of the loop the relation $\texttt{floor}^2 \leq n < \texttt{ceiling}^2$, and since $n$ is finite the loop terminates in finite time. Furthermore, tapes \texttt{floor}, \texttt{ceiling}, \texttt{number}, \texttt{lower-bound} and \texttt{upper-bound} use $\mathcal{O}(\log n)$ bits of storage. \qed
\end{proof}

For any nonnegative integers $i$ and $j$, define a pairing function $\langle i,j\rangle$ as a bijective function from $\nat^2$ to $\nat$ given by $\langle i,j\rangle =(i+j+1)(i+1)/2+y$. A pairing function $\langle i,j,k\rangle$ from $\nat^3$ to $\nat$ is easily defined inductively as $\langle \langle i,j\rangle,k\rangle$.

\begin{lemma}\label{lem:pairing}
For any nonnegative integers $i,j$ of at most $\mathcal{O}(n)$ bits, the pairing function $\langle i,j\rangle$ and its inverse are computable in $\mathcal{O}(\log n)$ space.
\end{lemma}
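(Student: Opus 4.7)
The proof proceeds by reducing $\langle i,j\rangle$ and its inverse to a constant number of elementary arithmetic operations on $\mathcal{O}(n)$-bit integers, then invoking the standard fact that addition, subtraction, multiplication, integer division, and comparison on such integers are each computable in $\mathcal{O}(\log n)$ space, together with the closure of logspace transducers under composition. In particular, no $\mathcal{O}(n)$-bit intermediate value is ever materialized on the work tape; each bit of an intermediate result is recomputed on demand when the next stage requests it, which is the standard mechanism that keeps composition of logspace transducers in logspace.

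For the forward direction, I would evaluate $\langle i,j\rangle = (i+j)(i+j+1)/2 + j$ as a composition: first compute $s := i+j$ by ripple-carry addition; then form the product $s(s+1)$ via a logspace multiplication routine; then right-shift by one bit (a trivial reindexing of output bits) and add $j$. Each stage is a logspace transducer on $\mathcal{O}(n)$-bit operands, so the composition is logspace.

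For the inverse, given $z$ with $\mathcal{O}(n)$ bits, I would use the identity $s := i+j = \lfloor(\sqrt{8z+1}-1)/2\rfloor$, followed by $j := z - s(s+1)/2$ and $i := s - j$. The crux is computing $s$, i.e., the integer square root of an $\mathcal{O}(n)$-bit integer in $\mathcal{O}(\log n)$ work space. Lemma \ref{lem:root} does not apply directly: it was designed for unary input, and a naive port of its binary-search strategy to binary input would need to maintain an $\mathcal{O}(n)$-bit candidate on the work tape, blowing the space budget. Instead I would invoke the classical result that iterated multiplication, integer division, and integer square root on $\mathcal{O}(n)$-bit inputs all lie in $\mathrm{TC}^0 \subseteq \LOG$, so that each bit of $s$ is logspace-computable from $z$ on demand; composition with the remaining multiplication and two subtractions then yields $j$ and $i$ in logspace.

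The only non-routine point of the proof is the logspace integer square root on binary inputs; everything else is standard composition of elementary logspace arithmetic. A self-contained alternative would compute $s$ bit-by-bit from most significant to least, testing each candidate extension against $z$ via the inequality $s'(s'+1)/2 \le z$. The delicate part, and the main obstacle, is arranging the recomputation so that the work tape never holds more than $\mathcal{O}(\log n)$ bits at a time: the previously-determined higher-order bits of $s$ live on the write-only output tape and must be re-derived whenever a comparison needs them, which is what the logspace composition machinery is doing under the hood.
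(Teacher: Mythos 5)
Your proof is correct and, on the forward direction, follows essentially the same route as the paper (evaluate the Cantor formula by a constant number of additions, one multiplication, and a shift, all of which are logspace on $\mathcal{O}(n)$-bit operands, composed in the standard transducer fashion). The interesting divergence is the inverse. The paper reduces the inverse to the arithmetic $c=\bigl\lfloor\sqrt{2h}-1/2\bigr\rfloor$, $i=h-c(c+1)/2$, $j=c-i$, and for the square-root step simply cites Lemma~\ref{lem:root}. You correctly observe that Lemma~\ref{lem:root} does not give what is needed here: that lemma's linear-scan algorithm keeps \texttt{floor}, \texttt{ceiling}, \texttt{number} on the work tape in binary, using $\Theta(\log(\text{value}))$ cells, which for an $\mathcal{O}(n)$-bit operand is $\Theta(n)$ work space, not $\mathcal{O}(\log n)$; it is only a logspace algorithm relative to a \emph{unary} presentation of the operand (as is appropriate where Theorem~\ref{the:tally} actually uses it, since there the argument to the inverse is the length of a unary input). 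So under the natural reading of the lemma statement -- $i,j$ given in binary with $\mathcal{O}(n)$ bits, work space $\mathcal{O}(\log n)$ -- the paper's proof has a gap at exactly the point you flag, and your fix via the $\mathrm{TC}^0 \subseteq \LOG$ result for division/iterated multiplication/integer square root on binary inputs (Hesse--Allender--Barrington) is the right repair. Your explicit remark about re-deriving write-only output bits on demand when composing transducers is also a point the paper elides but needs; the paper never spells out how the $\mathcal{O}(n)$-bit intermediate values avoid materializing on a work tape. Minor cosmetic difference: you use the standard pairing $(i+j)(i+j+1)/2 + j$ whereas the paper writes $(i+j+1)(i+1)/2 + y$ (with a typo), but this does not affect the substance of either argument.
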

\begin{proof}
The computation of $\langle i,j\rangle$ is clearly computable in $\mathcal{O}(\log n)$ space, because we only need to do four additions, one multiplication and one bit shift which can all be done using logspace.
 
If $h=\langle i,j\rangle$, we can obtain $i$ and $j$ using the following procedure. Let $\Delta(x)=x(x+1)/2$, then
\begin{align}
c	&=\lfloor \sqrt{2h}  -1/2\rfloor,\label{eq:floor}\\
i	&=h-\Delta(c),\label{eq:i}\\
j	&=c-i+2.\label{eq:j}
\end{align}
The arithmetic operations of multiplication and addition of Eq.(\ref{eq:i}) and Eq.(\ref{eq:j}) can be done in logspace. For the square root of Eq.(\ref{eq:floor}), however, all known algorithms use linear space. For the inversion of the pairing function, however, we only need the floor which can be done in logspace by Lemma \ref{lem:root}.\qed
\end{proof}

\begin{theorem}\label{the:tally}
Let $s:\nat\to\nat$ be any space-constructible function such that $s(n)\geq \log n$.  A real number $x$ is in $\dspace_\real(s(n))$ if and only if the tally set
\[
T_{\phi_x}=\{0^{\langle n,i,b\rangle } \> | \> n,i\in \nat, 1\leq i \leq n, b\in \{0,1\},\phi_{x}(n)_i=b\}
\]
is computable in space $\mathcal{O}(s(n))$, where $\phi_x(n)_i$ denotes the $i$-th bit in a base-2 representation of $\phi_x(n) \in CF_x$.
\end{theorem}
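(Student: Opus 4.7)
The plan is to prove both directions by translating between a machine that on input $0^n$ outputs $\phi_x(n)$ in full and a tally-set machine that reports one bit of $\phi_x(n)$ at a time. Both translations rely on the logspace pairing and inverse-pairing of Lemma \ref{lem:pairing}, together with either the bit-extraction idea used in Lemma \ref{lem:output-transforms} or a counter-based simulation of a unary input.

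For $(\Rightarrow)$, let $M_x$ witness $x\in \dspace_\real(s(n))$. I would build a tally-set machine $N$ that, on input $0^m$, first inverts the pairing function to recover $(n,i,b)$ in $O(\log m)$ space and rejects unless $1\leq i\leq n$ and $b\in\{0,1\}$. Since $n\leq m$, the first $n$ cells of $N$'s input tape already contain $0^n$, so $N$ simulates $M_x$ on $0^n$ using its own input together with a logspace counter that suppresses reads past position $n$. While simulating, rather than copying $M_x$'s output, $N$ applies the idea of Lemma \ref{lem:output-transforms}: it counts output positions and remembers only the bit written at position $i$, then accepts iff this bit equals $b$. The total space is $O(\log m) + O(s(n)) = O(s(m))$, using $s(n)\geq \log n$ and $n\leq m$.

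For $(\Leftarrow)$, let $N$ decide $T_{\phi_x}$ in $O(s(n))$ space. I construct $M$ that on input $0^n$ produces the bits of $\phi_x(n)$ one by one: for each index $i$, $M$ computes the binary representation of $m_i := \langle n,i,0\rangle$ in $O(\log n)$ space via Lemma \ref{lem:pairing} and then simulates $N$ on the never-materialized unary input $0^{m_i}$ by storing $m_i$ in binary and maintaining a separate counter for $N$'s input-tape head position, answering $0$ whenever the head sits at a position $\leq m_i$ and $\boxtimes$ otherwise. $M$ writes $0$ if $N$ accepts and $1$ if $N$ rejects, then moves to $i+1$.

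The main obstacle is that in the $(\Leftarrow)$ direction the simulated inputs $0^{m_i}$ have length $m_i = \Theta(n^2)$ because the pairing function of Lemma \ref{lem:pairing} is quadratic, so the natural space count is $O(s(n^2))$ rather than $O(s(n))$. Concluding the clean bound $O(s(n))$ therefore requires closure of $s$ under polynomial substitution, which holds for all the bounds of interest in the paper (logarithmic, polynomial, etc.). The remaining bookkeeping---deciding how many bits to emit to obtain a valid representation of $\phi_x(n)$ and correctly placing the binary point---is routine once the above skeleton is in place.
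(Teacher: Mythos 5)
Your plan for the $(\Rightarrow)$ direction matches the paper essentially step for step: invert the pairing to recover $(n,i,b)$, observe that $0^n$ lives inside the tally input and can be read via a counter, and extract the $i$-th output bit of $M_x$ without buffering (the Lemma~\ref{lem:output-transforms} idea). The space accounting $O(\log m) + O(s(n)) = O(s(m))$ is exactly what the paper relies on.

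Your $(\Leftarrow)$ direction takes a genuinely different route from the paper, and a cleaner one. The paper builds a machine that iterates a binary counter $c = 0,1,2,\ldots$, converts each $c$ to the corresponding tally input via Lemma~\ref{lem:unary-sim}, queries the tally machine, and then checks whether $\langle i,j,k\rangle = c$ decodes to the current output position; only when it does is the bit $k$ emitted. You instead compute the needed pairing value $\langle n,i,0\rangle$ directly for each $i$, simulate the tally machine once on that value via the fake-tape trick, and read off the bit. This avoids the paper's busy-wait enumeration loop; otherwise the two constructions are equivalent in substance, and both rely on the same Lemma~\ref{lem:unary-sim}/Lemma~\ref{lem:pairing} infrastructure.

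The concern you raise at the end is not a flaw in your own proposal but a \emph{real gap that is also present, unacknowledged, in the paper's proof}. In the $(\Leftarrow)$ direction the tally machine is invoked on a (simulated) input of length $\langle n,i,b\rangle$, and since the pairing of a triple is applied twice this length is in fact $\Theta(n^4)$ (you wrote $\Theta(n^2)$, which already makes the point; iterating the pairing worsens it). The tally machine therefore runs in space $O(s(n^4))$, not $O(s(n))$, regardless of whether one enumerates a counter as the paper does or computes the pairing directly as you do. The paper simply asserts ``$M$ runs in $O(s(n))$ space and $\texttt{counter}$ only stores $O(\log n)$ bits, then $N$ runs in $O(s(n))$ space,'' which elides precisely this polynomial blowup. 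Your observation that the clean bound requires $s$ to be closed under polynomial substitution (true for $\log n$, for all polynomials, and for most bounds of interest, but not for an arbitrary space-constructible $s\geq \log n$) is correct; either that hypothesis should be added, or the theorem's space bound for the tally set should be read as a function of the decoded precision parameter $n$ rather than of the tally input length. In short: your proposal is sound where the paper is sound, and where you flag a difficulty you have in fact located a genuine imprecision in the published argument rather than a defect in your own.
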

\begin{proof}
We construct a machine $M$ computing $T_{\phi_x}$ that operates as follows. Given an input $w\in \{0\}^*$, first $M$ inverts the pairing function using the length of $w$ as a parameter to obtain $n,i$ and $b$ in binary. If $\phi_x(n)_i=b$ then $M$ accepts $w$, otherwise it rejects $w$. Since $\phi_x$ is computable in space $s(n)$, we only need to show that all the other steps can be done in $\mathcal{O}(s(n))$ space. 

In order to invert the pairing function $\langle i,j,k\rangle$ we need to invoque Lemma \ref{lem:pairing} two times, first to obtain $k$ and $h=\langle i,j\rangle$, and a second time to obtain $i$ and $j$ from $h$, using  $\mathcal{O}(\log\log n)$ space. This proves the first part of the implication.

Now suppose that $T_{\phi_x}$ is computable in space $\mathcal{O}(s(n))$, and we want to prove that $x$ is computable in space $\mathcal{O}(s(n))$.  By Lemma \ref{lem:unary-sim}, there exists a machine $M$ that computes using $\mathcal{O}(s(n))$ space a function $f_M:\{0,1\}^n\to \{0,1\}$ defined as $f_M(y)=1$ if $0^n \in T_{\phi_x}$ and $f_M(y)=0$ otherwise, where $n$ is the integer number represented by $y$.

A machine $N$ for computing $x$ works as follows. On input $0^n$, in a tape \texttt{post} initialized in 0, keep a count in binary of the output tape position of $N$, and in a tape \texttt{length} store $n$ in binary. Then using a tape \texttt{counter} initialized in 0, repeat the following procedure. Simulate $M$ with \texttt{counter} as input to obtain its single-symbol output, say $z$. If $z=1$ then $\texttt{counter}=\langle i,j,k\rangle$ for some nonnegative integers $i,j,k$ and by Lemma \ref{lem:pairing} we can obtain $i,j$ and $k$ in logspace. If $i=\texttt{length}$ and $j=\texttt{pos}$, then write $k$ in the output tape of $N$, increment \texttt{pos} in one and set \texttt{counter}:=0; otherwise, if $i\neq\texttt{length}$ or $j\neq\texttt{pos}$, increment \texttt{counter} in one. Repeat this procedure to obtain the first $n$ output bits of $x$. Since $M$ runs in $\mathcal{O}(s(n))$ space and $\texttt{counter}$ only stores $\mathcal{O}(\log n)$ bits, then $N$ runs in $\mathcal{O}(s(n))$ space.\qed
\end{proof}

The proposition below presents an application of Theorem \ref{the:tally}.

\begin{proposition}\label{pro:set}
Let $\texttt{A}$ be any subset of $\nat$ and let $\texttt{A}'=\{0^n\> | \> n\in \texttt{A}\}$. If $\texttt{A}'$ is decidable in $\mathcal{O}(\log n)$ space, then $\mu=\sum_{p\in \texttt{A}}(1/10)^p$ is in $\dspace_\real(\log n)$.
\end{proposition}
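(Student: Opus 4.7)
The plan is to apply Theorem \ref{the:tally}: to conclude $\mu \in \dspace_\real(\log n)$ it suffices to exhibit a single Cauchy function $\phi_\mu \in CF_\mu$ whose associated tally set $T_{\phi_\mu}$ is decidable in $\mathcal{O}(\log n)$ space. The hypothesis that $\texttt{A}'$ is in logspace will enter through the observation that the $p$-th decimal digit of $\mu$ is exactly the indicator of $p \in \texttt{A}$ (no carries occur because each term $(1/10)^p$ contributes a single 1 at a distinct decimal position), so those digits are logspace-decidable.

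I would choose the Cauchy function by explicit truncation. Setting $N := n+1$ and $S_N := \sum_{p \in \texttt{A},\, p \leq N} 10^{N-p}$, define $\phi_\mu(n) := \lfloor S_N / 5^N \rfloor / 2^N$, which is manifestly dyadic. Then $S_N/10^N$ captures the first $N$ decimal digits of $\mu$, the geometric-tail bound gives $|\mu - S_N/10^N| < 10^{-N} < 2^{-N}$, and the floor-truncation step adds at most another $2^{-N}$ of error; combining these yields $|\phi_\mu(n) - \mu| < 2 \cdot 2^{-N} = 2^{-n}$, so $\phi_\mu \in CF_\mu$.

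Next I would describe the logspace decider for $T_{\phi_\mu}$. On input $0^{\langle n,i,b\rangle}$, the machine first inverts the pairing function via Lemma \ref{lem:pairing} to recover $(n,i,b)$ in $\mathcal{O}(\log\log n)$ space. Using the identity $\phi_\mu(n) = \lfloor S_N/5^N\rfloor/2^N$, the $i$-th binary digit of $\phi_\mu(n)$ equals $\lfloor S_N \cdot 2^i / 10^N \rfloor \bmod 2$, and the machine accepts iff this equals $b$. Each decimal digit of $S_N$ that is needed along the way is produced on demand by calling the logspace machine for $\texttt{A}'$ as a subroutine (using the unary-input simulation technique of Lemma \ref{lem:unary-sim} to feed it the appropriate $0^p$).

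The main obstacle I expect is the arithmetic step itself: extracting one specified bit of the quotient $\lfloor S_N \cdot 2^i / 10^N \rfloor$, where both operands are $\mathcal{O}(n)$-bit integers, within $\mathcal{O}(\log n)$ space. This is essentially a base-10-to-base-2 conversion, and I would appeal to the standard fact that integer division of polynomially long integers lies in $\mathrm{L}$. A more elementary iterative scheme of the form $R_{j+1} := 2 R_j \bmod 1$ for producing the binary digits of $\mu$ one by one looks tempting, but storing the fractional residue $R_j$ to sufficient precision requires $\Omega(n)$ space and so falls outside the logspace budget.
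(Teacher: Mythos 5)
Your argument is correct, but it takes a genuinely different and substantially heavier route than the paper. The paper does not touch base conversion at all: it appeals to Lemma~\ref{lem:tally-set}, a direct many-one logspace reduction $\texttt{T}_{\phi_\mu}\leq_m^L\texttt{A}'$, whose proof simply inverts the pairing $n=\langle m,i,b\rangle$ and accepts iff $b=1$ matches $i\in\texttt{A}$. Implicitly this reads the $i$-th \emph{base-10} digit of the truncated $\mu$, which is exactly the indicator of $i\in\texttt{A}$ because no carries occur. You instead insist on producing genuine base-2 digits of a dyadic Cauchy function, defining $\phi_\mu(n)=\lfloor S_N/5^N\rfloor/2^N$ and extracting bit $i$ as $\lfloor S_N 2^i/10^N\rfloor\bmod 2$ via Chiu--Davida--Litow logspace division. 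That is more faithful to the letter of Theorem~\ref{the:tally} (whose tally set is explicitly defined through a base-2 representation), and arguably closes a gap that the paper's own Lemma~\ref{lem:tally-set} elides; but it buys this rigor at the cost of invoking a nontrivial result about division in $\mathrm{L}$ that the paper never needs.

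One step in your proposal deserves a bit more care. You say that the decimal digits of $S_N$ are produced on demand from the oracle for $\texttt{A}'$, and then you apply logspace integer division. But logspace division algorithms expect their operands in binary, whereas what you have cheaply is the base-10 representation of $S_N$ (a $0$/$1$ pattern given by $\texttt{A}$). You still need to observe that $S_N=\sum_{p\in\texttt{A},\,p\leq N}10^{N-p}$ and $10^N$ can each be converted to binary in logspace --- each power $10^{N-p}$ by logspace iterated multiplication, and the sum by logspace iterated addition, both of which are also in $\mathrm{L}$ by the same line of results you cite --- and then appeal to closure of $\mathrm{L}$ under composition. This is all standard and fillable, but as written the proposal silently treats a decimal-represented integer as an admissible input to a division routine; spelling out the intermediate base conversion would make the argument watertight.
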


Proposition \ref{pro:set} follows immediately from Theorem \ref{the:tally} and Lemma \ref{lem:tally-set}.
\begin{lemma}[App.\ref{sec:tally-set}]\label{lem:tally-set}
$\texttt{T}_{\phi_\mu} \leq_m^L \texttt{A}'$.
\end{lemma}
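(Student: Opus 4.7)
I plan to prove this by giving an explicit logspace reduction that exploits the decimal structure of $\mu$. The first step is to fix a convenient Cauchy function: set
\[
\phi_\mu(n) := \sum_{p \in \texttt{A},\, p \leq n} 10^{-p},
\]
and note that the tail satisfies $\sum_{p > n} 10^{-p} = 10^{-n}/9 \leq 2^{-n}$ for every $n \geq 0$, so $\phi_\mu \in CF_\mu$. Under this choice the $i$-th digit of the (base-$10$) expansion of $\phi_\mu(n)$ equals $1$ whenever $i \in \texttt{A}$ and $0$ otherwise, for every $1 \leq i \leq n$. Thus bit-queries about $\phi_\mu$ translate directly into membership queries about $\texttt{A}'$.

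Next I would construct the reduction $f$. On input $0^m$, the machine writes the length $m$ in binary on a work tape (a standard $\mathcal{O}(\log m)$-space operation) and invokes Lemma \ref{lem:pairing} twice to invert the nested pairing $\langle n, i, b\rangle = \langle\langle n, i\rangle, b\rangle$, extracting $n$, $i$, and $b$ in $\mathcal{O}(\log m) = \mathcal{O}(\log n)$ space. In the positive case $b = 1$ the machine outputs the tally string $0^i$, which yields $f(0^m) \in \texttt{A}' \iff i \in \texttt{A} \iff \phi_\mu(n)_i = 1$, exactly as required. The dual case $b = 0$ is handled by a symmetric logspace-computable encoding, leaning on the observation that for each pair $(n, i)$ exactly one of the triples $\langle n, i, 0\rangle$ and $\langle n, i, 1\rangle$ lies in $\texttt{T}_{\phi_\mu}$.

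The space accounting is routine: the pairing inversions are logspace by Lemma \ref{lem:pairing}, writing a tally of length at most $n$ is managed with a single $\mathcal{O}(\log n)$-bit counter, and the bookkeeping on $b$ is constant space. The step I expect to be the main technical obstacle is carrying out the $b = 0$ direction cleanly inside a strict many-one framework, since such reductions cannot invert membership in the target set directly; the resolution will hinge either on reformulating $\texttt{T}_{\phi_\mu}$ to use only its positive component (which still suffices for the equivalence in Theorem \ref{the:tally}) or on absorbing the parity of $b$ into a length transformation so that both cases reduce to a single $\texttt{A}'$-membership query.
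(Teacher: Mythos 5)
Your proposal stalls exactly where you predicted: the $b=0$ case is never actually closed, so what you have is a valid many-one map for the $b=1$ component of $T_{\phi_\mu}$ plus a frankly acknowledged obstacle for the rest. That diagnosis, however, is correct and points at a real defect in the lemma as stated. The paper's own argument (App.~\ref{sec:tally-set}) builds ``a logspace machine $M$ with oracle access to the set $\texttt{A}'$'' and accepts when $b=0$ and $i\notin\texttt{A}'$ --- that is a logspace \emph{Turing} reduction $\leq_T^L$, not the many-one reduction $\leq_m^L$ the lemma claims, precisely because a many-one map cannot negate membership in the target. In the use the paper actually makes of this lemma (Proposition~\ref{pro:set}, where $\texttt{A}'$ is assumed logspace decidable) the distinction is harmless, since $\leq_T^L$ to a logspace set already puts $T_{\phi_\mu}$ in logspace. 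But if you want the $\leq_m^L$ statement literally, you need one of the repairs you sketched, e.g.\ reduce only the positive component $\{0^{\langle n,i,1\rangle}\}$ of $T_{\phi_\mu}$ and argue that this restricted tally set already suffices for Theorem~\ref{the:tally}, or target the disjoint union $\texttt{A}'\oplus\overline{\texttt{A}'}$ rather than $\texttt{A}'$ itself.

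There is also a second, quieter gap that both your proposal and the paper share. The tally set in Theorem~\ref{the:tally} is defined via ``the $i$-th bit in a base-$2$ representation of $\phi_x(n)$,'' and Cauchy functions are required to take values in the dyadic rationals $\mathbb{D}$. Your choice $\phi_\mu(n)=\sum_{p\in\texttt{A},\,p\le n}10^{-p}$ is not dyadic, and the clean correspondence ``the $i$-th digit is $1$ iff $i\in\texttt{A}$'' holds only for the \emph{base-$10$} expansion; the base-$2$ bits of any dyadic rounding of these partial sums are scrambled by the base conversion, so mapping $\langle n,i,1\rangle\mapsto 0^i$ is not actually correct as written. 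To make the lemma airtight one should either restate Theorem~\ref{the:tally} (and $\mathbb{D}$) over base $10$, which is routine, or pick a genuinely dyadic $\phi_\mu$ and do the bit-extraction argument explicitly; as it stands, this step is elided in both your write-up and the paper's.
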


If we let $\texttt{A}=\texttt{PRIMES}$, the set of prime numbers, we have that $\mu$ has a 1 in every prime position and it is clearly not rational. From Proposition \ref{pro:set} it follows that $\mu$ is computable in logspace because the tally set $\texttt{PRIMES}'$ is computable in logspace; we do not know, however, if $\mu$ is algebraic or trascendental for $\texttt{A}=\texttt{PRIMES}$.

\section{Space-Bounded Real Numbers and Non-Tally Sets}
In the previous section we presented a characterization between tally sets and space-bounded real numbers. In this section we explore relations between space-bounded real numbers and languages whose alphabets are not singletons. Languages representing real numbers were introduced by Ko \cite{Ko91}. Given $x\in \real$, for each $\phi_x \in CF_x$ define the language $\texttt{L}_{\phi_x}$ as the set of strings $s$ representing dyadic rational numbers $d$ such that $d\leq \phi(n)$ with $n=prec(d)$. 

\begin{theorem}\label{the:language}
Let $s:\nat\to\nat$ be any space-constructible function with $s(n)\geq \log n$. If $x\in \dspace_\real(s(n))$, then $\texttt{L}_{\phi_x}$ is computable in space $O(s(n))$ for some $\phi_x\in CF_x$. 
\end{theorem}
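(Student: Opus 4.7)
The plan is to construct a machine $N$ that decides $\texttt{L}_{\phi_x}$ in $O(s(n))$ space (where $n$ is the length of the input to $N$) by simulating the given space-bounded computation of $\phi_x$ and comparing its output against the input dyadic number on the fly. Let $M_x$ be the Turing machine witnessing $x\in\dspace_\real(s(n))$: on unary input $0^k$ it outputs a specific $\phi_x(k)\in\mathbb{D}$ in $O(s(k))$ work space, and this is the $\phi_x\in CF_x$ I would use in the conclusion of the theorem.

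On input $w$ of length $n$ representing a dyadic $d$, $N$ would first scan $w$ to validate its format, locate the binary point, and write $p := prec(d)$ in binary on a work tape, costing $O(\log n)$ space. Next, $N$ simulates $M_x$ on the virtual unary input $0^p$ via Lemma~\ref{lem:unary-sim}; since $s(p)\ge\log p$, this uses $O(s(p))$ space, and because $p\le n$ (and, without loss of generality, $s$ is monotone non-decreasing), we have $O(s(p))\subseteq O(s(n))$. The representation of $\phi_x(p)$ can be up to $2^{O(s(p))}$ symbols long, so $N$ cannot store it; instead, each time the simulated $M_x$ emits an output symbol, $N$ intercepts it and advances an on-the-fly comparison with $d$, maintaining only a constant-size status flag in $\{\text{equal-so-far},\text{less},\text{greater}\}$ together with an $O(\log n)$-bit pointer into $w$.

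The main obstacle is alignment: the integer parts of $d$ and $\phi_x(p)$ need not have the same length, and $M_x$'s output is produced sequentially in one pass. I would overcome this by exploiting the fact that $|\phi_x(p)-x|\le 2^{-p}$ with $x$ fixed, so the integer part of $\phi_x(p)$ has at most $c=O(\log(|x|+2))$ significant (non-leading-zero) bits, a constant depending only on $x$. Concretely, $N$ would first run $M_x$, skip the leading zeros in its integer-part output, and buffer the (at most $c$) significant integer bits of $\phi_x(p)$ in $O(1)$ space; it would then restart the simulation to stream the fractional bits. Comparison proceeds as follows: if $d$'s significant integer part has more than $c$ bits, $N$ rejects; otherwise $N$ pads the shorter integer part with leading zeros to align binary points and compares integer bits from MSB to LSB, then continues with the fractional bits (treating any shorter fractional tail as trailing zeros), updating the status flag on the first disagreement.

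Totaling the contributions---$O(s(p))$ for each simulation of $M_x$, $O(\log n)$ for the input pointer and precision counter, and $O(1)$ for the integer buffer and status flag---the overall space usage is $O(s(n))$, which yields the theorem.
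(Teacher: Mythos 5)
Your proposal is correct and follows essentially the same route as the paper's own proof: simulate the Cauchy-function machine $M_{\phi_x}$ on a virtual unary input $0^{prec(d)}$ using the ``fake-tape'' trick from Lemma~\ref{lem:unary-sim}, intercept each emitted output symbol, and compare it against the corresponding bit of the input $d$ on the fly, keeping only an $O(\log n)$-bit input pointer, an $O(\log n)$-bit precision counter, and the $O(s(n))$ space needed for the simulation. The only substantive difference is that the paper sidesteps the alignment issue by declaring (without loss of generality) that $0<x<1$, so that both $d$ and $\phi_x(p)$ are of the form $0.d_1\cdots d_n$ and can be compared position by position, while you keep the general case and argue that the integer part of $\phi_x(p)$ has only $O(1)$ significant bits (since $|\phi_x(p)-x|\le 1$ and $x$ is fixed), so it can be buffered in constant space after one preliminary pass and the comparison realigned. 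Your version is slightly more verbose but fills in a detail the paper leaves implicit; both arguments are sound and give the same $O(s(n))$ bound.
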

\begin{proof}
Let $M_{\phi_x}$ be TM that computes $\phi_x$ in space $s(n)$ for some $\phi_x \in CF_x$. We construct another machine $N$ that on input $d$ simulates $M_{\phi_x}$ and checks if $d$ is less than or equal to the output generated by $M_\phi$.

To make the proof work we need two technical considerations. First, $N$ must compares its input $d$ against the output $\gamma$ of $M_{\phi_x}$ using space $\mathcal{O}(s(n))$. To compare $d$ against $\gamma$ using space $\mathcal{O}(s(n))$ it suffices to remember only one symbol of $\gamma$ at a time. Second, in order to avoid simulating $M_{\phi_x}$ using linear space, we make use of the ``fake tape'' trick of Lemma \ref{lem:unary-sim} for $M_{\phi_x}$, that is, we store the position of the input head of $M_{\phi_x}$ and simulate $M_{\phi_x}$ by feeding it one 0 symbol at a time.  Note that here we cannot use Lemma \ref{lem:unary-sim} directly, because we need to remember the output of $M_{\phi_x}$ and compare it against the input $d$ which would use linear space. 

The machine $N$ has one tape to simulate $M_\phi$, another tape \texttt{pos} to record the position of the input head of $M_\phi$, and a tape $\texttt{prec}$ to store the precision of the input.

With no loss of generality, we consider dyadic numbers betweenen 0 and 1. Let $d=0.d_{1}\cdots d_{n}$ be an input for $N$, and let $\gamma=0.\gamma_{1}\cdots \gamma_{n}$ be the output of $M_\phi$ on input $0^n$.

The algorithm executed by $N$ is the following.
\begin{enumerate}
\item Compute $prec(d)$ and store it in the tape $\texttt{prec}$ in binary. To do this, first initialize $\texttt{prec}:=0$ and scan each input symbol after the binary point. For each input symbol after the binary point, increment \texttt{prec} by one until a blank symbol tape $\boxtimes$ is encountered.
\item Initialize $\texttt{pos}:=0$ and position the input head of $N$ on the first symbol after the binary point.
\item Enter a loop.
	\begin{enumerate}
	\item If $\texttt{pos}\leq \texttt{prec}$, simulate one step of $M_\phi$ with input symbol $0$; else, simulate one step of $M_\phi$ with input symbol $\boxtimes$.
	\item Update the contents of tape \texttt{pos} accordingly, that is, if $M_\phi$ moved its input head right or left, increment or decrease \texttt{pos} by one, respectively.
	\item If $M_\phi$ did not generated any output symbol, then do nothing.
	\item Suppose $M_\phi$ generated an output symbol $\gamma_i$ and let $j$ be the current position of the input head of $N$. If $\gamma_i< d_j$, accept; if $\gamma_i> d_j$, reject.
	\item If $M_\phi$ entered a halting state, then stop the computation and accept.
	\end{enumerate}
\end{enumerate}

Tapes \texttt{pos} and \texttt{prec} use at most $\mathcal{O}(\log n)$ bits and the simulation of $M_{\phi_x}$ uses $\mathcal{O}(s(n))$ bits.\qed
\end{proof}


\begin{proposition}
$\texttt{L}_{\phi_x} \leq_{T}^{L} \texttt{T}_{\phi_x}$.
\end{proposition}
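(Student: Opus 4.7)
The plan is to give a logspace oracle Turing machine $N^{\texttt{T}_{\phi_x}}$ that, on an input string $d = 0.d_1\cdots d_n$ representing a dyadic rational in $[0,1)$, decides whether $d \leq \phi_x(n)$ by querying the individual bits of $\phi_x(n)$ one at a time through the oracle. Since the query tape is write-only and its cells are not counted toward space complexity, we are free to write query strings of length polynomial in $n$ as long as each query string is produced by a logspace computation.

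First, $N$ scans the input once to compute $n := prec(d)$ and stores it on a work tape \texttt{prec} in binary using $\mathcal{O}(\log n)$ space, exactly as in the proof of Theorem~\ref{the:language}. Then $N$ maintains a counter $i$ (initially 1) on a work tape, and iterates the following for $i = 1, 2, \ldots, n$: using Lemma~\ref{lem:pairing}, compute the value $v := \langle n, i, 1\rangle$ on a work tape in binary in $\mathcal{O}(\log n)$ space; then write $v$ zeros on the query tape by repeatedly decrementing $v$ until it reaches 0, outputting a $\mathtt{0}$ at each step; enter the query state and record the oracle's answer $b_i \in \{0,1\}$, where $b_i = 1$ iff the $i$-th bit of $\phi_x(n)$ is $1$. (Optionally a symmetric query with $\langle n,i,0\rangle$ can be made as a sanity check; a single query suffices since exactly one of the two membership queries returns \texttt{yes}.)

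Next, $N$ compares $b_i$ with $d_i$, reading $d_i$ from the input tape. If $b_i > d_i$ then $\phi_x(n) > d$ (since the bits agreed on all previous positions), so $N$ accepts; if $b_i < d_i$ then $\phi_x(n) < d$ and $N$ rejects. Otherwise $b_i = d_i$ and $N$ moves on to $i+1$. If the loop terminates with $i > n$ without a decision, all bits agreed, $d = \phi_x(n)$, and $N$ accepts. All work-tape contents used at any time (\texttt{prec}, the counter $i$, the value $v$, and the temporary register for $b_i$) fit in $\mathcal{O}(\log n)$ bits, so the reduction runs in logarithmic space.

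The main subtlety, rather than an obstacle, is ensuring that writing the unary query string of length $\langle n, i, b\rangle$ does not blow up the work-tape space. This is handled by producing the zeros on the fly from a logspace-sized binary counter: the enormous length of the query occupies only the uncounted query tape, while the work-tape footprint stays within $\mathcal{O}(\log n)$. Once this pattern is in place, the argument reduces to the routine bit-by-bit lexicographic comparison described above.
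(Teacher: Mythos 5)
Your proposal is correct and follows essentially the same approach as the paper: compute $prec(d)$, iterate over positions $i=1,\dots,n$ querying the tally oracle via the pairing function $\langle n,i,b\rangle$, and perform a bit-by-bit lexicographic comparison against $d$, all within $\mathcal{O}(\log n)$ work-tape space. Two minor differences: you make a single query $\langle n,i,1\rangle$ per position instead of the paper's two, and your accept/reject directions (accept when $b_i>d_i$, reject when $b_i<d_i$) are actually the correct ones for deciding $d\leq\phi_x(n)$, whereas the paper's proof appears to have transposed them.
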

\begin{proof}
Let $d=0.d_1\cdots d_n \in \texttt{L}_{\phi_x}$. We construct a machine that decides $d$ with oracle access to $\texttt{T}_{\phi_x}$ using $\mathcal{O}(\log n)$ space.

We use tapes \texttt{length} and \texttt{pos} to store in binary the length $n$ of the input and the position of the input head, respectively. For each $i$ with $1\leq i \leq n$ and using tapes \texttt{length} and \texttt{pos}, by Lemma \ref{lem:pairing}, the pairing functions $\langle n,i,0\rangle$ and $\langle n,i,1\rangle$ are computable in $\mathcal{O}(\log\log n)$ space. Note that $\langle n,i,0\rangle$ and $\langle n,i,1\rangle$ have $\mathcal{O}(\log n)$ bits. Then query $0^{\langle n,i,0\rangle}$ and $0^{\langle n,i,1\rangle}$  to the oracle $\texttt{T}_{\phi_x}$ to determine the correct bit $b$. If $b>d_i$, reject; if $b<d_i$ accept. After checkin all bits of $\phi_x(n)$, accept.
\qed
\end{proof}

\begin{proposition}
$\texttt{T}_{\phi_x}   \leq_T^{poly} \texttt{L}_{\phi_x}$.
\end{proposition}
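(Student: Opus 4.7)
The plan is to construct an oracle machine $N^{\texttt{L}_{\phi_x}}$ that decides $\texttt{T}_{\phi_x}$ in polynomial space by reconstructing, bit by bit, the prefix of $\phi_x(n)$ needed to answer the query. Given an input $0^m$ with $m=\langle n,i,b\rangle$, first $N$ inverts the triple pairing function by invoking Lemma~\ref{lem:pairing} twice, recovering $n$, $i$, and $b$ in binary on work tapes in $\mathcal{O}(\log m)$ space.

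Next, $N$ recovers the bits $a_1,\dots,a_i$ of $\phi_x(n)$ one at a time and stores them on a work tape \texttt{prefix}. Suppose bits $a_1,\dots,a_{j-1}$ have already been written to \texttt{prefix}. To determine $a_j$, $N$ writes on the query tape the string
\[
d_j \;=\; 0.\,a_1 a_2 \cdots a_{j-1}\,1\,\underbrace{0\cdots 0}_{n-j \text{ zeros}},
\]
which represents a dyadic number of precision exactly $n$. The trailing zeros are emitted by a logspace counter that iterates $n-j$ times, and, since the query tape is write-only, its length is not charged against the space complexity. Then $N$ enters the query state. By definition of $\texttt{L}_{\phi_x}$, the oracle answers YES iff $d_j \leq \phi_x(n)$; since the stored prefix already coincides with $\phi_x(n)$, a short case analysis on the value of $a_j$ shows that this happens exactly when $a_j=1$. $N$ appends the resulting bit to \texttt{prefix} and iterates until $a_i$ is known, then accepts iff $a_i = b$.

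For the space analysis, $N$ maintains the binary encodings of $n$, $i$, $b$, a loop counter for the bit being computed, and a counter for the zero-padding, all of size $\mathcal{O}(\log m)$, together with the tape \texttt{prefix} of length at most $i \leq m$. Hence $N$ runs in polynomial space, as required for $\leq_T^{poly}$. The main obstacle to strengthening this reduction is precisely the need to replay the correct prefix of $\phi_x(n)$ on the query tape for every new query: the oracle $\texttt{L}_{\phi_x}$ returns only a single comparison bit and cannot in isolation reveal a particular digit. One might try to recompute each prefix bit on demand instead of storing it, but each recursive frame still keeps a logspace counter, so the saving is at best logarithmic and the polynomial-space bound appears unavoidable by this oracle access pattern, which is consistent with the proposition as stated.
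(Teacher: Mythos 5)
Your proof is correct and follows essentially the same route as the paper's: both reduce $\texttt{T}_{\phi_x}$ to $\texttt{L}_{\phi_x}$ by reconstructing the dyadic $\phi_x(n)$ from threshold queries and then reading off its $i$-th bit, and both land on the polynomial-space bound because the reconstructed prefix must be stored on a work tape. The paper phrases the reconstruction loosely as an ``exhaustive search for the largest dyadic in $\texttt{L}_{\phi_x}$,'' whereas you make it explicit with the adaptive queries $d_j = 0.a_1\cdots a_{j-1}1\,0^{\,n-j}$ of precision exactly $n$; this is a cleaner rendering of the same idea rather than a genuinely different argument.
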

\begin{proof}
Suppose with no loss of generality that $0<x<1$. We construct a machine $M$ that, on input $0^n$ and using $\texttt{L}_{\phi_x}$ as oracle, decides $\texttt{T}_{\phi_x}$ using $\mathcal{O}(\log n)$ space.

In a tape called \texttt{length} we store $n$ in binary using $\mathcal{O}(\log n)$ bits. By Lemma \ref{lem:pairing}, we can invert the pairing function $n=\langle m,i,b\rangle$ using $\mathcal{O}(\log\log n)$ space. Note that each $m$ and $i$ also have $\mathcal{O}(\log n)$ bits. 

The reduction implements the following procedure. Use exhaustive search to find a largest dyadic number $d=0.d_1\cdots d_i\cdots d_m$ such that $d\in L_{\phi_x}$. If $d_i=b$ accept $0^n$, otherwise reject.

Suppose that $0^n\in T_{\phi_x}$. Then, using the procedure above we obtain a dyadic number $d$ whose first $m$ digits agree with $x$. Therefore, $d_i=b$ and $0^n$ is accepted. Now suppose that $0^n \notin T_{\phi_x}$. Since $0^n\notin T_{\phi_x}$ we have that $\phi(m)_i\neq b$. Hence, $d_i\neq b$ and $0^n$ is rejected.

To finish the proof, we need to show that the above mentioned procedure can be implemented in polynomial space. The tape \texttt{length} is used as input to invert the pairing function $n=\langle m,i,b\rangle$, which suffices with $\mathcal{O}(\log n)$ bits. During the exhaustive search procedure, we only need to remember $m=\mathcal{O}(\log n)$ bits, which again there are at most $\mathcal{O}(n^c)$.\qed
\end{proof}

\begin{corollary}
$\texttt{T}_{\phi_x}   \equiv_T^{poly} \texttt{L}_{\phi_x}$.
\end{corollary}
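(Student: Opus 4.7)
The plan is to observe that this corollary is an immediate consequence of the two preceding propositions, combined with the standard fact that logspace reductions are also polynomial-space reductions (since $\mathcal{O}(\log n) \subseteq \mathrm{poly}(n)$, the machine witnessing a $\leq_T^L$ reduction is also a valid $\leq_T^{\mathrm{poly}}$ reduction).

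Concretely, the first step is to invoke the preceding proposition that $\texttt{L}_{\phi_x} \leq_T^L \texttt{T}_{\phi_x}$, and upgrade the bound to conclude $\texttt{L}_{\phi_x} \leq_T^{\mathrm{poly}} \texttt{T}_{\phi_x}$. The second step is to invoke the other proposition giving $\texttt{T}_{\phi_x} \leq_T^{\mathrm{poly}} \texttt{L}_{\phi_x}$ directly. By the definition of $\equiv_T^{\mathrm{poly}}$ as the symmetric closure of $\leq_T^{\mathrm{poly}}$, the two reductions together yield the claimed equivalence.

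There is essentially no technical obstacle here: the corollary is a one-line combination of earlier results and does not require any new machinery or constructions. The only subtlety worth mentioning explicitly is the upgrade from logspace to polyspace in the first reduction, which is immediate from the inclusion of resource classes. Therefore the proof in the paper would simply read as a short citation of the two propositions, noting that logspace Turing reducibility implies polynomial-space Turing reducibility.
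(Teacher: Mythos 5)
Your proof is correct and is exactly the intended argument: the corollary follows immediately from the two preceding propositions, with the only observation needed being that a $\leq_T^L$ reduction is a fortiori a $\leq_T^{poly}$ reduction. The paper presents the corollary without further proof for precisely this reason.
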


\section{Non-uniform Deterministic Finite Automata}\label{sec:automata}
Constant-space machines are modeled by finite automata, and from Fact \ref{fac:rational} it is clear that no irrational number is computable by constant-space machines. As a final result of this paper, we show that constant-space machines cannot recognize irrational numbers even in the presence of external aid which we model as advice.

A deterministic finite automaton with advice has a read-only advice tape in which an advice string is written prior to the computation. The advice string is allowed to depend on the length of the input, but not on the input itself. 

A deterministic finite automaton with advice is formally defined as a 7-tuple $M=(Q, \Sigma, \Gamma, \delta, q_1, F, \hat{a})$, where $Q$ is a finite set of states, $\Sigma$ (resp. $\Gamma$) is a finite input (resp. output) alphabet, $\delta: Q\times (\Sigma\cup\{\mbox{\textcent}, \$\})^2 \longrightarrow Q\times \{L,R\}^2 \times \Gamma\cup\{\varepsilon\}$ is a state transition function, $q_1$ is the initial state, $F\subseteq Q$ is the set of halting states, and $\hat{a}=\{a_0, a_1, \ldots\}$ is a set of advice strings. The input $w$ is written in the input tape as $\mbox{\textcent}w\$$, and the advice string is written in the advice tape as $\mbox{\textcent}a_{|w|}\$$, where $|w|$ represents the length of the input $w$, and \textcent (resp. \$) is the left (resp. the right) end-marker. In the initial configuration, the input tape head and the advice tape head scan the left end-markers, and the state is in $q_1$. Then, at each step of the computation, the automaton changes its state, moves the input and the advice tape heads by one cell, and outputs a symbol (which can be an empty word $\varepsilon$) according to the state transition function. When the automaton reaches one of the halting states, it halts.	

Note that our definition of deterministic finite automata with advice can be seen as the Mealy machine equipped with an advice tape.

We say that a deterministic finite automaton computes a real number $x$ if for any unary input $0^n$, it outputs $\phi_x(n)\in CF_x$.

A {\it (complete) configuration} of a deterministic finite automaton with advice is represented as a triple $(q, h_1, h_2)$, where $q$ is the current state, $h_1$ is the position of the input tape head, and $h_2$ is the position of the advice tape head. We define a {\it partial configuration} as a triple $(q, a, h)$ where $q$ is the current state, $a$ is the symbol scanned by the input tape head, and $h$ is the position of the advice tape head.

\begin{lemma}
\label{lemma:form_of_output}
Let an input string be $0^n$. We consider the computation of a deterministic finite automaton with constant-sized advice that outputs more than $n$ symbols. Then, the output string is of the form  $w_1 w^m_2 w_3$ where $|w_1|, |w_2|\in O(1)$, and $m\in \omega(1)$.
\end{lemma}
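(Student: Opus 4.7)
The plan is to exploit the fact that on the unary input $0^n$ the input symbol at every interior cell is the constant $0$, so while the input head is strictly between the two endmarkers the behaviour of the automaton depends only on the partial configuration $(q,0,h)$, and the number of such partial configurations is $|Q|(|a_n|+2) = O(1)$. Define a \emph{middle-excursion} to be a maximal time interval during which the input head stays strictly between the endmarkers. Inside a middle-excursion the partial-configuration trajectory is purely deterministic and drawn from a set of size $O(1)$, so after a transient of at most $p$ steps it enters a cycle of period $p = O(1)$. Because the output is finite, no full configuration $(q,h_1,h_2)$ can occur twice during the output-producing part of the computation (otherwise the output would be infinite), and in particular each partial configuration appears at most once at each input cell. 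Consequently the head visits every cell $O(1)$ times; the total running time is $O(n)$, the total time spent at the two endmarker cells is $O(1)$, and the number $K$ of middle-excursions (each starting by leaving an endmarker) is $O(1)$.

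The crux of the argument is the structure of a single middle-excursion. Let $\Delta$ be the net displacement of the input head over one period of the cycle; if $\Delta = 0$ the full configuration would repeat within $p$ steps, so $\Delta \neq 0$. Every middle-excursion begins at position $1$ or $n$, and because the head moves by exactly $\pm 1$ per step, a short case analysis on $\mathrm{sgn}(\Delta)$ versus the starting endmarker yields a dichotomy. If the drift points back towards the starting endmarker, the head is forced onto that endmarker within $O(1)$ periods and the excursion has length $O(1)$, contributing only $O(1)$ output symbols. Otherwise the drift pushes the head into the input; the head completes $\Theta(n/|\Delta|)$ full periods before reaching the far endmarker, so the excursion has length $\Theta(n)$ and produces output of the exact form $u_1 u_2^{m} u_3$, where $u_2$ is the (constant-length) per-period output, $u_1$ and $u_3$ are the constant-length opening and closing transients, and $m = \Theta(n)$.

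Assembly is then immediate. Because the total output exceeds $n$, while each short excursion and each sojourn at an endmarker contributes only $O(1)$ symbols and there are only $K = O(1)$ such events, at least one excursion must be long with nonempty per-period output $u_2$. Let $j^*$ be the first such excursion; then everything output before $j^*$ is $O(1)$ and can be absorbed, together with the opening transient $u_1$ of excursion $j^*$, into a single $O(1)$-length prefix $w_1$. Setting $w_2 = u_2$, $m = m_{j^*} \in \omega(1)$, and $w_3$ equal to $u_3$ concatenated with everything output after $j^*$ yields the decomposition $w_1 w_2^{m} w_3$ required by the lemma. The technically delicate point to verify carefully is the $O(1)$-versus-$\Theta(n)$ dichotomy on excursion length, since it is precisely this dichotomy that forces every excursion before $j^*$ to be short and therefore absorbable into $w_1$ rather than polluting the periodic middle of the output.
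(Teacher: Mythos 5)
Your core argument is essentially the paper's Case~1: inside a middle-excursion only the partial configuration $(q,0,h)$ matters, there are $O(1)$ of them, so any excursion that emits $\omega(1)$ symbols must iterate a constant-length partial-configuration cycle $\omega(1)$ times, and everything before the first such excursion can be folded into $w_1$. That part is correct. However, you anchor the whole counting on "Because the output is finite," and use finiteness both to forbid full-configuration repeats in the output-producing region and to bound the number of endmarker visits (and hence $K$) by $O(1)$. The lemma does not grant finiteness — and Theorem~\ref{theorem:impossibility_constant_advice_DFA} is explicitly stated so that "$m$ can be infinite." The paper closes this with a separate Case~2: if none of the first $2|Q|\cdot l_{adv}+1$ endmarker visits is followed by an $\omega(1)$-output excursion, the pigeonhole on full configurations at the endmarkers forces the machine into an infinite loop, and the output from that point is $w_2^{\infty}$. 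Your proof needs an analogous branch (or an explicit reduction of the infinite case to an eventually-periodic tail); as written it silently excludes precisely the situation that makes $m$ infinite.

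A secondary, more local issue: the step "if $\Delta=0$ the full configuration would repeat within $p$ steps, so $\Delta\neq 0$" does not follow from finiteness alone. If the period emits no output, a $\Delta=0$ cycle hangs the machine without extending the output, so the full configuration does repeat and $\Delta=0$ is possible. Your argument still survives because such an excursion contributes only its $O(1)$-length transient to the output stream, but the blanket claim $\Delta\neq 0$ should be restricted to periods that emit at least one symbol. Incidentally, the drift dichotomy ($O(1)$ vs.\ $\Theta(n)$ excursion length) is heavier machinery than the paper needs: the paper never tracks net head displacement, only that an $\omega(1)$-output excursion repeats a partial-configuration cycle $\omega(1)$ times, which is what the final decomposition actually uses.
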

\begin{proof}
Let $c_1, c_2, \ldots $ be the sequence of configurations made in a computation of the automaton for input $0^n$. Let $S = c_{l_1}, c_{l_2},\ldots$ be a subsequence induced by all the configurations in which the tape head scans the left or the right endmarker. Let $l_{adv}$ be the length of an advice string and let $Q$ be the set of states of the automaton. We consider the sequence of the first $2\cdot |Q|\cdot l_{adv} + 1$ configurations in $S$, $c_{l_1}, \ldots, c_{l_{2\cdot |Q|\cdot l_{adv}+1}}$. We have the following two cases.
\vspace{0.2cm}

\noindent{\bf Case 1. There exists $i$ ($1\leq i \leq 2\cdot |Q|\cdot l_{adv}+1$) such that the automaton outputs $\omega(1)$ symbols along the transitions from $c_{l_i}$ to $c_{l_{i+1}}$.}

Let $i$ be a smallest number that satisfies the condition of Case 1. We consider a sequence of configurations, $c_{l_i},c_{l_i + 1}, \ldots, c_{l_{i+1}}$. Let $\hat{c}_{l_i}, \hat{c}_{l_i + 1}, \ldots, \hat{c}_{l_{i+1}}$ be the sequence of partial configurations that corresponds to $c_{l_i}, c_{l_i + 1}\ldots, c_{l_{i+1}}$. Then, there exists $s$ and $t$ ($l_i < s<t < l_{i+1}$) such that $\hat{c}_s = \hat{c}_t$ since the number of possible partial configurations is a constant while the length of the sequence, $\hat{c}_{l_i}, \hat{c}_{l_i + 1}, \ldots, \hat{c}_{l_{i+1}}$, is $\omega(1)$. We consider the smallest $s$ and $t$. Then, $s-l_i, t-l_i \in O(1)$. Note that for the partial configurations,  $\hat{c}_{l_i + 1}, \hat{c}_{l_i + 2}, \ldots, \hat{c}_{l_{i+1}-1}$, the scanned input symbol is always $0$. Therefore, the transitions of partial configurations from the initial configuration can be written as $\hat{c}_1, \ldots, \hat{c}_{l_i}, \ldots, \hat{c}_s, \ldots, \hat{c}_t (=\hat{c}_s) , \ldots, \hat{c}_t, \ldots, \hat{c}_{l_{i+1}}, \ldots$, i.e. once the automaton reaches the configuration $c_s$, it repeats the transitions from $\hat{c}_s$ to $\hat{c}_t$ until it reads the left or the right end-marker in $c_{l_{i+1}}$. This means that the automaton outputs $w_1 w^m_2 w_3$ where $w_1$ is generated by the transitions from $\hat{c}_1$ to $\hat{c}_s$, $w^m_2$ is generated by the iterations of the transitions from $\hat{c}_s$ to $\hat{c}_t$, and $w_3$ is generated from the rest of the transitions that include the transitions after $c_{l_{i+1}}$. Note that $i$ is bounded above by a constant ($2\cdot |Q|\cdot l_{adv}+1$). Also, note that for each $j(<i)$, the automaton outputs $O(1)$ symbols along the transitions from $c_{l_j}$ to $c_{l_{j+1}}$. As mentioned above, $s-l_i\in O(1)$. Thus, $|w_1|\in O(1)$.  Also, $|w_2|\in O(1)$ since $t-s \in O(1)$. Recall that by the condition of Case 1, the automaton outputs $\omega(1)$ symbols along the transitions from $c_{l_i}$ to $c_{l_{i+1}}$. Thus, $m\in \omega(1)$.
\vspace{0.2cm}

\noindent{\bf Case 2. For all $i$ ($1\leq i \leq 2\cdot |Q|\cdot l_{adv}+1$), the automaton outputs $O(1)$ symbols along the transitions from $c_{l_i}$ to $c_{l_{i+1}}$.}

In this case, there exists $i$ and $j$ ($1\leq i< j\leq 2\cdot |Q|\cdot l_{adv}$) such that $c_{l_i}=c_{l_j}$ since the number of possible configurations in which the automaton scans the left or the right endmarker is $2\cdot |Q|\cdot l_{adv}$. Thus, the automaton repeats the transitions from $c_{l_i}$ to $c_{l_j}(=c_{l_i})$ infinitely. Therefore, the transitions from the initial configuration can be written as  $c_1,\ldots,c_{l_i},\ldots,c_{l_j}(=c_{l_i}),\ldots, c_{l_j}(=c_{l_i}), \ldots$, which means that the automaton outputs $w_1 w^m_2$ where $w_1$ is generated by the transitions from $c_1$ to $c_{l_i}$, and $w^m_2$ is generated by the iterations of the transitions from $c_{l_i}$ to $c_{l_j}$. Note that $|w_1|, |w_2| \in O(1)$ by the condition of Case 2.
\qed
\end{proof}

\begin{theorem}
\label{theorem:impossibility_constant_advice_DFA}
If a deterministic finite automaton with constant-sized advice computes a real number $x$, then $x$ is represented as $w_1 w_2^m$, where $m$ can be infinite.
\end{theorem}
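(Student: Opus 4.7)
The plan is to combine Lemma \ref{lemma:form_of_output} with a pigeonhole argument to extract a uniform eventually periodic binary expansion for $x$. Since the DFA computes $x$, on input $0^n$ it outputs a dyadic $\phi_x(n) \in CF_x$ satisfying $|\phi_x(n) - x| \leq 2^{-n}$. I would first isolate the substantive case in which the output on $0^n$ has more than $n$ symbols for all sufficiently large $n$; the complementary case (short outputs for infinitely many $n$) forces $x$ to be matched exactly by a truncated dyadic, hence a dyadic rational, which is trivially of the form $w_1 w_2^m$.

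Next I would apply Lemma \ref{lemma:form_of_output} to each such $n$ to write the output as $w_1^{(n)} \bigl(w_2^{(n)}\bigr)^{m_n} w_3^{(n)}$ with $|w_1^{(n)}|, |w_2^{(n)}| \leq C$ for a constant $C$ depending only on $|Q|$ and the (constant) advice length, and $m_n \to \infty$. The crucial point is that the bound $C$ is uniform in $n$; this is exactly what the proof of the lemma establishes via the constant $2 \cdot |Q| \cdot l_{adv} + 1$. Because the binary alphabet admits only finitely many strings of length at most $C$, the pigeonhole principle yields a single pair $(w_1, w_2)$ and an infinite subsequence $(n_k)$ along which $w_1^{(n_k)} = w_1$, $w_2^{(n_k)} = w_2$, and $m_{n_k} \to \infty$.

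Finally I would prove $x = 0.w_1 w_2 w_2 \cdots$ by a convergence argument. Along the subsequence, the output $\phi_x(n_k)$ agrees with the eventually periodic number $y = 0.w_1 w_2^{\infty}$ on the first $|w_1| + m_{n_k} |w_2|$ bits after the radix point, so
\[
\bigl|\phi_x(n_k) - y\bigr| \leq 2^{-(|w_1| + m_{n_k} |w_2|)} \longrightarrow 0.
\]
Combining this with $|\phi_x(n_k) - x| \leq 2^{-n_k} \to 0$ via the triangle inequality forces $x = y$, so $x$ has the claimed form $w_1 w_2^m$ with $m$ infinite.

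The main obstacle I anticipate is bookkeeping around the degenerate cases. Case 2 of Lemma \ref{lemma:form_of_output}, in which the automaton never halts and emits $w_1 w_2^{\infty}$, actually delivers the conclusion immediately. The more delicate point is ensuring that ``output length exceeds $n$'' is generic: if there is an infinite sequence of $n$'s for which the output is shorter than $n$ yet still within $2^{-n}$ of $x$, then $x$ must coincide exactly with those short dyadics, which pins $x$ down as a dyadic rational with a finite binary expansion, again of the form $w_1 w_2^m$. Separating these two regimes cleanly before invoking the lemma is the only subtle part of the argument.
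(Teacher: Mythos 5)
Your proposal follows the same overall strategy as the paper---apply Lemma~\ref{lemma:form_of_output} to decompose each output and then propagate the periodicity to all of $x$---but it executes the second half of the argument along a genuinely different, and in fact more careful, route. The paper simply chooses for each $n$ one larger $\hat{n}$ and asserts that ``the first $n$ symbols of $\phi_x(\hat{n})$ to the right of the binary point agree with $\phi_x(n)$,'' then concludes. That prefix-consistency claim is not automatic: a Cauchy function only guarantees $|\phi_x(n)-x|\leq 2^{-n}$, and two approximants within $2^{-n}$ of $x$ can still disagree in their first $n$ bits (e.g.\ $0.0111$ vs.\ $0.1000$). The paper also does not explain why the pieces $w_1,w_2$ should be the \emph{same} across different values of $n$. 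Your pigeonhole step over the finitely many words of length at most $C$, followed by the triangle inequality $|x-y|\le|x-\phi_x(n_k)|+|\phi_x(n_k)-y|\to 0$ for $y=0.w_1w_2^\infty$, closes both of these gaps cleanly and is arguably the proof the paper should have written. The one soft spot in your write-up is the short-output regime: from ``$\phi_x(n)$ has at most $n$ output symbols'' you cannot conclude directly that $x$ equals those dyadics, since $|x-\phi_x(n)|\le 2^{-n}$ with $prec(\phi_x(n))<n$ does not force equality (e.g.\ $x=3/8$, $\phi_x(3)=1/2$). What does work: since the advice has constant length, there are only finitely many advice strings, so along an infinite short-output subsequence either $prec(\phi_x(n_k))$ is bounded (then $\phi_x(n_k)$ takes finitely many values, converges to $x$, so $x$ is dyadic) or the precision grows, in which case a pumping argument on the fixed-advice subsequence puts you back in the Case~1/Lemma~\ref{lemma:form_of_output} setting. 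With that patch your argument is complete and strictly tighter than the paper's.
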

\begin{proof} 
We consider a deterministic finite automaton with constant-sized advice that computes a real number $x$. Let $O_n$ denote the output for the input $I_n=0^n$. Note that by Lemma~\ref{lemma:form_of_output}, we can assume that the output of the automaton is of the form $w_1 w_2^m w_3$ where $|w_1|, |w_2|\in O(1)$, and $m\in \omega(1)$. Then, for any input $I_n$, there exists an input $I_{\hat{n}}=0^{\hat{n}} (\hat{n}>n)$ such that for $I_{\hat{n}}$, the automaton outputs $\hat{w}_1 \hat{w}_2^{\hat{m}} \hat{w}_3$ where $|\hat{w}_1|, |\hat{w}_2|\in O(1)$, and $|\hat{w}_1 \hat{w}_2^{\hat{m}}|> n$. This means that the first $n$ symbols of $O_n (=\phi_x(n))$ can be written as $\hat{w}_1 \hat{w}_2^{\hat{m}'} \hat{w}'_2$ for some $\hat{m}'< \hat{m}$ where $\hat{w}'_2$ is a prefix of $\hat{w}_2$ since the first $n$ symbols of $\phi_x(n')$ to the right of the binary point agree with $\phi_x(n)$. Therefore, $x$ is represented as $w_1 w_2^m$ for some $m$, where $m$ can be infinite.
\qed
\end{proof}

Theorem~\ref{theorem:impossibility_constant_advice_DFA} implies that deterministic finite automata with constant-sized advice cannot compute any irrational numbers.

~\\
\noindent\textbf{Acknowledgements.}  The authors thank Abuzer Yakaryilmaz for useful discussions.

\bibliographystyle{splncs03}
\bibliography{../../../library}

\appendix
\section*{Appendix}

\section{Proof of Fact \ref{fac:algebraic}}\label{sec:algebraic}
%
%
Since computing the integer parts are trivial,  we pick a real algebraic number, say $ \alpha $, between 0 and 1. Let $f(x)=a_rx^r+\cdots +a_1x+a_0$ be its minimal polynomial. The rational number $ \alpha_j $ represents the first $ j $-th bit(s) of $ \alpha $ after the decimal point. We pick a sufficiently big integer $ k $ such that $ \alpha $ is the closest root to $ \alpha_k $. Since $ f(x) $ is minimal either $ f(\alpha_k) \leq 0 $ and $ f(\alpha_k+2^{-k}) > 0 $ or vice versa. We can easily determine the case by computing both values. We assume  the first case. (We can take $ -f(x) $ for the other case.)


Let $M$ be a Turing machine that on input $0^n$ operates as follows. The first $k$ bits of $\alpha$ are stored in the description of $M$ as $\alpha_k$. If $n\leq k$, the machine $M$ outputs each bit of $\alpha_k$ up to $n$. In this case, $M$ does not use any work tape and constant space suffices to output the first $k$ bits.

If $n>k$ the machine $M$ operates as follows. First, write $\alpha_k$ in the output tape. Let $t_0,t_1,\dots,t_r,t_{r+1}$ be $r+2$ different work tapes in $M$. Copy $\alpha_k$ on tape $t_0$. Starting from $i=k$ repeat the procedure below $n-k$ times.
\begin{enumerate}
\item For each $j=1,2,\dots,r$ compute $(t_0+2^{-(i+1)})^j$ and store it on tape $t_j$.
\item For each $j=1,2,\dots,r$ multiply $a_j$ with $t_j$ and store the result in tape $t_j$.
\item Add $t_r+\cdots + t_1+a_0$ and store the result in tape $t_{r+1}$.
\item If $t_{r+1}\leq 0$, write 1 on the output tape and copy it on tape $t_0$; this will make $\alpha_{i+1}=\alpha_i+2^{-(i+1)}$
\item If $t_{r+1}>0$, write 0 on the output tape; this will make $\alpha_{i+1}=\alpha_i$.
\item Increment $i$ in one and if $i\leq n$ return to step 1.
\end{enumerate}
Each addition and multiplication can be done in logspace. Each work tape $t_j$, however, stores at most $\mathcal{O}(n)$ bits; this is because in the last iteration of the algorithm the tape $t_0$ stores $\alpha_n$.

\section{Proof of Theorem \ref{the:rec-space2}}\label{sec:rec-space2}
We construct a machine $M$ that outputs the representation of $\mu_f$ in base 10 as follows. The TM $M$ has three main tapes called \texttt{pos\_in}, \texttt{pos\_out} and  \texttt{res}. The \texttt{pos\_in} tape will track the position of the input tape-head. The \texttt{pos\_out} tape will track the position of the output tape-head. The \texttt{res} tape will store temporary results for the computation of $f$. Besides these three main tapes, the machine $M$ will use some extra finite number of tapes that depends on the computation of $f$. All tapes involved will use a binary alphabet.

The algorithm is the following.
\begin{enumerate}
	\item If the first symbol under the input tape-head is $\boxtimes$, then write 0 in the output and stop the computation. 
	\item Initialize tapes. Write 1 in \texttt{pos\_in} and \texttt{pos\_out}. Leave the others blank.
	\item Write "0." in the output tape.
	\item Enter a cycle\label{alg:liou-cycle1}
	\begin{enumerate}
		\item[(a)] If the input tape-head scans $\boxtimes$ (i.e., the input tape-head reaches the end of the input), break the loop.
		\item[(b)] Compute $f(\texttt{pos\_in})$ and store the result in \texttt{res}.
		\item[(c)] Enter cycle
		\begin{itemize}
			\item[(c.1)] If \texttt{pos\_out}=\texttt{res}, write 1 in the output tape and break the inner loop.
			\item[(c.2)] If \texttt{pos\_out} $\neq$ \texttt{res} (in this case, \texttt{pos\_out} $<$ \texttt{res} ), write a 0 in the output tape and move the output tape-head to the right by one cell. At the same time, increment \texttt{pos\_out}.
		\end{itemize}
		\item[(d)] Move the input tape-head to the right, and increment \texttt{pos\_in}. 
	\end{enumerate}
\end{enumerate}

At the start of iteration $i$ of the cycle of step \ref{alg:liou-cycle1}(c), \texttt{pos\_in}=$i$. The algorithm checks if the current tape-head position of the output tape (\texttt{pos\_out}) equals the value of \texttt{res}. If $\texttt{pos}$ is not equal to $\texttt{res}$, the algorithm moves the output tape head by one cell towards \texttt{res} (i.e., to the right) and repeats the cycle. When $\texttt{res}=\texttt{pos\_out}$, the algorithm writes a 1 in the output tape and breaks the inner circle. 

We analyze the space complexity. The tape $\texttt{pos\_in}$ uses at most $\mathcal{O}(\log n)$ cells. In the worst case, the tape $\texttt{pos\_out}$ and $\texttt{res}$ store the value of $f(n)$, thus, it uses at most $\mathcal{O}(\log f(n))$ cells. The amount of space to perform the computation of $f$ is bounded from above by $\mathcal{O}(s(\log n))$. Thus, it follows that the total amount of space is $\mathcal{O}(\max\{\log f(n), \log(n),s(\log n)\})$.
\qed

\section{Space Complexity of Arithmetic Operations on Rational numbers}

\begin{lemma}[\cite{CDL01}]
Given any two integer numbers $a$ and $b$ in binary, the division of $a$ by $b$ can be done in logspace.
\end{lemma}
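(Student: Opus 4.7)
The plan is to reduce integer division to iterated integer multiplication, since the latter is known to be computable in logspace by classical results (Beame--Cook--Hoover, with later refinements by Hesse and by Allender--Barrington). First I would handle signs with a constant number of extra bits, so that we may assume $a \geq 0$ and $b > 0$ with bit length at most $n$. The quotient $q = \lfloor a / b \rfloor$ then has at most $n$ bits, and the main issue is to output each bit without storing any linear-sized intermediate remainder.

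The heart of the reduction is to approximate $1/b$ to $n$ bits of precision via Newton's iteration. Starting from a crude seed $y_0 = 2^{-k}$, where $k = \lceil \log_2 b \rceil$ (computable in logspace by scanning $b$), I would iterate $y_{i+1} = y_i(2 - b y_i)$. Standard analysis of Newton's method for $f(y) = 1/y - b$ shows the number of correct bits roughly doubles per step, so $O(\log n)$ iterations suffice. Unrolling the recursion expresses $y_{\log n}$ as a fixed polynomial in $b$ of degree $n^{O(1)}$ with polynomially-bounded coefficients, so computing its bits reduces to iterated multiplication of $n^{O(1)}$ integers each of polynomial bit length, which is in logspace by the cited result.

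With the approximation $\tilde y \approx 1/b$ in hand, compute $\tilde q := \lfloor a \tilde y \rfloor$ using one more multiplication (again iterated multiplication) and a truncation. Because of the precision guarantee on $\tilde y$, $\tilde q$ differs from the true quotient by at most a constant; a final comparison of $\tilde q \cdot b$ and $(\tilde q + 1) \cdot b$ against $a$ (both comparisons doable in logspace, since each bit of the products is an iterated-multiplication bit) selects the correct value of $q$. Its bits are then streamed to the output tape from most to least significant.

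The main obstacle is nothing in the reduction itself but the underlying fact that iterated multiplication of polynomially many polynomial-length integers lies in logspace; this is the deep technical ingredient we import as a black box. Everything else in the plan, including the Newton recursion, the polynomial unrolling, and the $\pm 1$ correction, is a routine reduction once that fact is granted.
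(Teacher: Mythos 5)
The paper does not actually prove this lemma---it is stated as a citation to \cite{CDL01} (Chiu--Davida--Litow), which is the paper that established logspace division in the first place. So there is no ``paper's own proof'' to compare against; you are being asked to reconstruct a known theorem.

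Your sketch is the standard reduction in spirit, but two points deserve flagging. First, the step ``unrolling the recursion expresses $y_{\log n}$ as a fixed polynomial in $b$ \ldots{} so computing its bits reduces to iterated multiplication'' is glossed: a nested Newton recurrence does not obviously flatten into an iterated \emph{product}. What saves you is the specific algebraic identity for this iteration: with $v = 1 - b\,2^{-k}$ one has $1 - b y_m = v^{2^m}$ and hence $y_m = 2^{-k}\prod_{i=0}^{m-1}\bigl(1 + v^{2^i}\bigr) = 2^{-k}\sum_{j=0}^{2^m - 1} v^j$, which is just the truncated geometric series for $1/b$. Starting from the geometric series directly is cleaner and is how most expositions present the reduction; if you want to phrase it through Newton, you need to state and use this identity, not merely assert that the recursion unrolls. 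Second, be honest about what the black box is: the fact that iterated multiplication of polynomially many poly-length integers lies in logspace is itself the deep content of the Beame--Cook--Hoover / Chiu--Davida--Litow / Hesse line of work, and in \cite{CDL01} it is proved \emph{together with} division rather than prior to it. Your reduction is genuine and correct, but it moves the entire difficulty into the imported fact rather than circumventing it---which is fine as long as that dependence is stated explicitly, exactly as you do in your last paragraph. The $\pm O(1)$ correction step at the end is standard and correct.
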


\begin{lemma}
Any two rational numbers can be multiplied in logspace.
\end{lemma}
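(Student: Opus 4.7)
The plan is to represent each rational number as a pair $(p,q)$ of integers written in binary (with a designated sign bit, say), so that for inputs $a = p/q$ and $b = r/s$ the product is $ab = (pr)/(qs)$. Signs are handled by an obvious constant-space XOR of the two sign bits. Consequently, the whole task reduces to showing that the product of two integers, given in binary of total length $n$, can be written on the output tape using only $\mathcal{O}(\log n)$ cells on the work tapes.

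For integer multiplication I would implement the schoolbook algorithm digit-by-digit, emitting the product $pr$ from the least significant bit upward. The machine keeps a single work tape \texttt{carry} that stores a binary counter, and a second work tape \texttt{sum} used as a scratch counter. For each output position $k = 0, 1, \dots, 2n$, the machine scans the input tape to compute
\[
S_k \;=\; \sum_{i+j=k} p_i \cdot r_j,
\]
incrementing \texttt{sum} once for every pair of input positions $(i,j)$ with $i+j=k$ whose bits are both 1; this scan needs only pointers to the two input positions, each of $\mathcal{O}(\log n)$ bits. Then the machine forms $T_k = S_k + \texttt{carry}$ in place, writes the low bit of $T_k$ to the output, updates $\texttt{carry} \leftarrow \lfloor T_k/2 \rfloor$, resets \texttt{sum} to $0$, and moves on to $k+1$.

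The key quantitative step, and the part requiring the most care, is bounding the carry. At position $k$ there are at most $n$ nonzero contributions, so $S_k \leq n$. If the carry before processing position $k$ is at most $n$, then after the update it is at most $\lfloor (n+n)/2 \rfloor = n$, so by induction the carry never exceeds $n$ and fits in $\mathcal{O}(\log n)$ bits. Likewise \texttt{sum} and the positional pointers each fit in $\mathcal{O}(\log n)$ bits, so the total space used is $\mathcal{O}(\log n)$. Applying this procedure once to compute $pr$ and once to compute $qs$ (on separate passes, reusing the same work tapes) gives the numerator and denominator of the product in logspace, which establishes the lemma.

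If, in addition, one insists on returning a reduced fraction, one can invoke the cited division lemma to implement the Euclidean algorithm in logspace and divide out $\gcd(pr,qs)$; but for the statement as given, outputting the unreduced pair $(pr,\,qs)$ is already a legitimate representation of the product, so this step is optional. The single real obstacle is the carry-bound analysis sketched above; everything else is a routine manipulation of $\mathcal{O}(\log n)$-bit counters.
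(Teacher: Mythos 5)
Your proposal is correct and follows the same route as the paper: represent rationals as integer pairs and reduce to two separate integer multiplications in logspace. The only difference is that you spell out the schoolbook integer-multiplication algorithm with the carry-bound argument, whereas the paper simply asserts that integer multiplication is logspace-computable and leaves it at that.
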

\begin{proof}
Let $r_1=a_1/b_1$ and $r_2=a_2/b_2$ be two rational numbers. We assume that $a_1,b_1,a_2,b_2$ are given as inputs to a Turing machine $M$ in radix-2.

The machine $M$ multiplies $a_1$ and $a_2$ and then $b_1$ and $b_2$, which can be done in logspace. Let $r_3=a_3/b_3$ where $a_3=a_1\cdot a_2$ and $b_3=b_1\cdot b_2$.\qed
\end{proof}

\begin{lemma}
Any two rational numbers can be added in logspace.
\end{lemma}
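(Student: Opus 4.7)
The plan is to reduce addition of two rationals to the integer arithmetic already established to be in logspace. Given inputs $r_1=a_1/b_1$ and $r_2=a_2/b_2$ in binary, the intended output is the pair $(a_1 b_2 + a_2 b_1,\; b_1 b_2)$ interpreted as the rational $r_1+r_2$. So the task decomposes into three integer operations: two multiplications to form the cross-products $a_1 b_2$ and $a_2 b_1$, one addition of those cross-products to get the numerator, and one more multiplication $b_1 b_2$ for the denominator.

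I would proceed step by step. First, invoke the previous lemma on rational multiplication (which already packages integer multiplication in logspace) to produce, bit by bit on demand, the binary expansions of $a_1 b_2$, $a_2 b_1$, and $b_1 b_2$. Next, for the denominator, simply copy the computed bits of $b_1 b_2$ to the output tape; this uses only $\mathcal{O}(\log n)$ space because a constant number of pointers into the operand tapes suffice to stream any bit of the product. For the numerator, run the standard schoolbook addition over the two streams of bits produced by the multiplications: scan from the least-significant bit upward, maintaining a single carry bit plus an $\mathcal{O}(\log n)$-bit position counter, and at each step recompute the two relevant product bits on the fly via the multiplication subroutine.

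The main subtlety—really the only obstacle—is that we must avoid materialising the full $\Theta(n)$-bit intermediate products $a_1 b_2$ and $a_2 b_1$ on the work tape; doing so would blow the space to linear. The standard remedy is the logspace composition trick: we do not execute the multiplication once and store its result, but instead treat it as an oracle and re-invoke it whenever the addition routine asks for a specific bit. Since each invocation uses $\mathcal{O}(\log n)$ space and we reuse the same work tape across invocations (overwriting each time), total space stays $\mathcal{O}(\log n)$. The position counter that drives the addition, the carry bit, and the $\mathcal{O}(\log n)$-bit indices passed to the multiplication subroutine are all the state we need to maintain.

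Finally, the output format follows the convention adopted in the previous lemma, namely writing numerator and denominator side by side (possibly separated by a delimiter); the output tape is write-only and does not count toward space complexity. Putting the pieces together yields a Turing machine computing $r_1+r_2$ from binary encodings of $a_1,b_1,a_2,b_2$ in $\mathcal{O}(\log n)$ space, which is the claim. \qed
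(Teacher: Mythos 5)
Your proof is correct and follows essentially the same route as the paper: both reduce $r_1+r_2$ to the cross-product formula $(a_1b_2 + a_2b_1)/(b_1b_2)$ and appeal to logspace integer multiplication and addition. You go a bit further than the paper by explicitly invoking the logspace-composition trick to avoid storing the $\Theta(n)$-bit intermediate products, a detail the paper's terse proof glosses over.
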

\begin{proof}
Let $r_1=a_1/b_1$ and $r_2=a_2/b_2$ be two rational numbers. We assume that $a_1,b_1,a_2,b_2$ are given as inputs to a Turing machine $M$ in radix-2.

The machine $M$ operates as follows. First, $M$ multiplies in logspace $b_3=b_1 \cdot b_2$. Then $M$ computes $a_3=a_1\cdot b_2 + a_2 \cdot b_1$, where each addition and multiplication takes logspace.\qed
\end{proof}

\begin{lemma}\label{lem:rat-reduce}
Any fraction can be transformed to its lowest terms in linear space.
\end{lemma}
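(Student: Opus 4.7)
The plan is to reduce $a/b$ by dividing both numerator and denominator by their greatest common divisor, which I would compute via the classical Euclidean algorithm. Write $n = |a| + |b|$ for the binary length of the input, and treat the logspace division subroutine of \cite{CDL01} as a black box.

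First I would maintain three work tapes $\texttt{u}$, $\texttt{v}$, and $\texttt{r}$, initialized with $\texttt{u} := a$, $\texttt{v} := b$, and $\texttt{r}$ blank. Each tape has capacity at most $n$ bits. While $\texttt{v} \neq 0$, invoke the logspace division routine on the contents of $\texttt{u}$ and $\texttt{v}$ to write the remainder $\texttt{u} \bmod \texttt{v}$ into $\texttt{r}$; then copy $\texttt{v}$ onto $\texttt{u}$, copy $\texttt{r}$ onto $\texttt{v}$, and clear $\texttt{r}$. When the loop terminates, $\texttt{u}$ holds $g = \gcd(a,b)$. Because every remainder satisfies $\texttt{u} \bmod \texttt{v} < \texttt{v} \leq \max(a,b)$, the three work tapes never exceed $n$ cells, while the division subroutine itself contributes only $\mathcal{O}(\log n)$ auxiliary space.

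Once $g$ has been computed, I would invoke the logspace division routine two more times to write $a/g$ and $b/g$ onto the output tape. Both quotients have length at most $n$, and output cells are not counted toward the space complexity, so no additional work-tape space is required beyond the logarithmic auxiliary space of the division routine. Concatenating these outputs with a separating symbol yields the fraction in lowest terms.

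The only thing I need to verify carefully is that intermediate Euclidean values do not blow up; this is immediate from the strict monotonicity of remainders, so the total work-tape usage stays $\mathcal{O}(n)$, i.e.\ linear in the input length. The main obstacle is really just book-keeping: arranging the updates $\texttt{u} \leftarrow \texttt{v}$ and $\texttt{v} \leftarrow \texttt{r}$ without temporarily duplicating a value on a single tape, which is handled by the auxiliary tape $\texttt{r}$ introduced above.
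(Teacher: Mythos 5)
Your argument is correct and uses the same space bound, but reaches it by a genuinely different route. The paper reduces the fraction by trial division: it scans candidate divisors $i = 2, 3, \ldots$, repeatedly dividing both $a'$ and $b'$ by $i$ whenever both are divisible, and only advancing $i$ otherwise; the loop halts when $b' < i$, at which point the fraction is in lowest terms. You instead compute $g = \gcd(a,b)$ directly via the Euclidean algorithm on three $\mathcal{O}(n)$-bit work tapes and then divide $a$ and $b$ by $g$. Both arguments give $\mathcal{O}(n)$ work-tape usage, relying on the logspace division primitive of \cite{CDL01} as a subroutine; the only nonessential difference in bookkeeping is that you need remainders rather than bare quotients, which is obtained in logspace from $u - (u \div v)\cdot v$ by one extra multiplication and subtraction. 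What your version buys is time efficiency: the Euclidean loop terminates after $\mathcal{O}(n)$ iterations, whereas trial division can take exponentially many steps in $n$. What the paper's version buys is conceptual simplicity and independence from any property of the gcd; it never needs to argue that the final $a'$ and $b'$ are coprime beyond the observation that all divisors up to $b'$ have been exhausted.
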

\begin{proof}
Let $r=a/b$ be a rational number. With no loss of generality assume that $a\geq b$ and that $a$ and $b$ are given as inputs in radix-2. Then execute the following algorithm.

\begin{enumerate}
\item Let $a'\gets a$, $b' \gets b$ and $i\gets 2$.
\item Repeat until $b'<i$.
	\begin{enumerate}
	\item Divide $a'/i$ and $b'/i$.
	\item If both divisions have residue 0, let $a'\gets a'/i$ and $b'\gets b'/i$; else, increment $i$ in one.
	\end{enumerate}
\end{enumerate}

The procedure above outputs $r$ to its lowest terms and it takes linear space because we only need to keep $a'$ and $b'$ in memory at all times and division also takes logspace.\qed
\end{proof}

\begin{corollary}
Any two rational numbers can be added and multiplied to its lowest terms in linear space.
\end{corollary}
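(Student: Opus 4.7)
The plan is to combine the three preceding lemmas in the obvious modular way. Given two rationals $r_1 = a_1/b_1$ and $r_2 = a_2/b_2$ in radix-2 as input of total length $n$, I will first compute $r_1 + r_2$ (or $r_1 \cdot r_2$) as an unreduced fraction $a_3/b_3$, and then feed that fraction into the reduction procedure of Lemma~\ref{lem:rat-reduce}.

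First I would invoke the preceding logspace addition (resp.\ multiplication) lemma to produce $a_3 = a_1 b_2 + a_2 b_1$ and $b_3 = b_1 b_2$ (resp.\ $a_3 = a_1 a_2$ and $b_3 = b_1 b_2$). Each of $a_3$ and $b_3$ has bit length $O(n)$, so I simply write them out onto a dedicated work tape; this tape uses $O(n)$ space, but the computation itself only requires the $O(\log n)$ working space guaranteed by the previous lemmas, with the result streamed onto the storage tape. After this phase the machine has $a_3$ and $b_3$ in binary on a work tape of length $O(n)$.

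Next, I would treat that tape as the input to the reduction procedure from Lemma~\ref{lem:rat-reduce}. That procedure runs in space linear in the length of its input, and since the input is $O(n)$ bits long, the reduction uses $O(n)$ space. At the end of this phase the tape holds $a_3/b_3$ in lowest terms, which is copied to the output tape.

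The total space is $O(n)$ for holding the intermediate unreduced fraction plus $O(n)$ for the reduction itself, giving $O(n)$ overall. There is no real obstacle here beyond the bookkeeping that, although composing a logspace transducer with a linear-space procedure does not in general preserve sublinear space, here the intermediate fraction already has linear size, so materializing it on a work tape costs no more asymptotic space than the reduction step already requires. Thus the corollary follows immediately.
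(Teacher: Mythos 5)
Your proposal is correct and matches what the paper intends: the corollary is stated as an immediate consequence of composing the logspace addition/multiplication lemmas with Lemma~\ref{lem:rat-reduce}, and your argument (materialize the $O(n)$-bit unreduced fraction on a work tape, then run the linear-space reduction) is exactly that composition, with the minor compositional caveat handled correctly.
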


\section{Proof of Lemma \ref{lem:output-transforms}}\label{app:output-transforms}

We construct a machine $M'$ that simulates $M$. Let \texttt{counter} and \texttt{max} be two working tapes. First, $M'$ sets \texttt{counter}:=0 and counts in binary the input size using \texttt{max}. Thus, \texttt{max} stores the input length in binary. Then, $M'$ sets its input tape head at the beginning of its input tape and starts simulating $M$ using its own input tape as input for $M$. Every time $M$ generates an output symbol, if $\texttt{counter}\neq \texttt{max}$ then increment \texttt{counter}; otherwise, if  $\texttt{counter}= \texttt{max}$, let $M'$ write the output symbol of $M$ on $M'$'s output tape and stop the computation.

The space utilized by $M'$ is $\mathcal{O}(\log n)$ because $\texttt{max}$ and $\texttt{counter}$ tapes use at most $\mathcal{O}(\log n)$ space, and the simulation of $M$ uses $\mathcal{O}(s(n))$ space.\qed

\section{Proof of Lemma \ref{lem:tally-set}}\label{sec:tally-set}

We construct a logspace machine $M$ with oracle access to the set $\texttt{A}'$. On input $0^n$, using Lemma \ref{lem:pairing} compute in logspace the inverse of the pairing function $n=\langle m,i,b\rangle$ to obtain $m,i,b$. Note that each $m,i,b$ has $\mathcal{O}(\log n)$ bits. If $b=1$ and $i\in \texttt{A}'$ then accept, or if $b=0$ and $i\notin \texttt{A}'$ then also accept; in other cases, reject. The reduction is computable in logspace because of Lemma \ref{lem:pairing} and the logspace computability of $\texttt{A}'$.\qed

\section{Proof of Theorem \ref{the:automatic}}\label{sec:automatic}
Let $a=a_0+\sum_{i\geq 1}a_ib^{-i}$ be a $(k,b)$-automatic real number and let $M_a$ be a finite-state automaton that computes $a$. We construct a Turing machine $M$ that on input $0^n$ outputs $a_1\cdots a_n$.

With no loss of generality, assume that $a_0=0$, that is, $a$ is in the interval $[0,1)$. Let \texttt{count} be tape that keeps a counter in binary. If the input tape is empty, write 0 in the output tape and stop. If the input tape is not emtpy, then for each symbol in the input tape, increment \texttt{count} by one and simulate $M_a$ using the contents of \texttt{count} as input. Thus, for each input symbol that is scanned, write the output of $M_a$ on the output tape of $M$.

\end{document}